\documentclass[sn-mathphys,Numbered]{sn-jnl}


\usepackage{graphicx}%
\usepackage{subfig}
\usepackage{multirow}%
\usepackage{amsmath,amssymb,amsfonts}%
\usepackage{amsthm}%
\usepackage{mathrsfs}%
\usepackage[title]{appendix}%
\usepackage{xcolor}%
\usepackage{textcomp}%
\usepackage{manyfoot}%
\usepackage{booktabs}%
\usepackage{algorithm}%
\usepackage{algorithmicx}%
\usepackage{algpseudocode}%
\usepackage{listings}%
\usepackage{caption}
\usepackage{float}
\usepackage{makecell}
\usepackage{natbib}
\usepackage[section]{placeins}




\newtheorem{definition}{\textbf{Definition}}
\newtheorem{theorem}{\textbf{Theorem}}
\newtheorem{example}{\textbf{Example}}%

\raggedbottom

\begin{document}

\title[Article Title]{Efficient \textit{k}-step Weighted Reachability Query Processing Algorithms}

\author[1]{Congquan Mei}
\author[1]{Lian Chen}
\author*[1]{Junfeng Zhou}\email{zhoujf@dhu.edu.cn}
\author[1]{Ming Du}
\author[1]{Sheng Yu}
\author[2]{Xian Tang}
\author[3]{Ziyang Chen}
\affil[1]{School of Computer Science and Technology, Donghua University, Shanghai, 201620,  China}

\affil[2]{School of Electronic and Electrical Engineering, Shanghai University of Engineering Science, Shanghai, 201620, China}

\affil[3]{Shanghai Lixin University of Accounting and Finance, Shanghai, 201620, China}

\abstract{Given a data graph \textit{G}, a source point \textit{u} and a target point \textit{v} of a reachability query, the reachability query is used to answer whether there exists a path from \textit{u} to \textit{v} in \textit{G}. Reachability query processing is one of the fundamental operations in graph data management, which is widely used in biological networks, communication networks, and social networks to assist data analysis. The data graphs in practical applications usually contain information such as quantization weights associated with the structural relationships, in addition to the structural relationships between vertices. Thus, in addition to the traditional reachability relationships, users may want to further understand whether such reachability relationships satisfy specific constraints. In this paper, we study the problem of efficiently processing \textit{k}-step reachability queries with weighted constraints in weighted graphs. The \textit{k}-step weighted reachability query questions are used to answer the question of whether there exists a path from a source point \textit{u} to a goal point \textit{v} in a given weighted graph. If it exists, the path needs to satisfy 1) all edges in the path satisfy the given weight constraints, and 2) the length of the path does not exceed the given distance threshold \textit{k}. To address the problem, firstly, WKRI indexes supporting \textit{k}-step weighted reachability query processing and index construction methods based on efficient pruning strategies are proposed. Secondly, the idea of constructing  indexes based on part of the points is proposed to reduce the size of the indexes and two optimized indexes are designed and implemented based on the vertex coverage set to design and implement two optimized indexes GWKRI and LWKRI. Finally, experiments are conducted on several real datasets. The experimental results verify the efficiency of the method proposed in this paper in answering \textit{k}-step weighted reachability queries.}

\keywords{graph data management, reachability query processing, \textit{k}-step reachability queries, weight-constrained reachability queries}

\maketitle

\section{Introduction}\label{sec1}

Reachability query processing is one of the fundamental problems in graph data management\cite{1duong2021efficient,2sengupta2019arrow,3zhou2017dag,4chen2008efficient,5chen2011decomposing,6trissl2007fast,7wang2006dual,8peng2020answering,9chen2021efficiently}   and is used to detect the existence of a connected path between two points in a graph. Reachability query processing is widely used in application areas such as social networks \cite{10choudhary2015survey}, road networks \cite{11cheng2012efficient}, and PPI networks\cite{12jin2008efficiently}. For example, in social networks, a reachability query can be used to determine whether there is a specific interaction between two given people; in road networks, a reachability query can answer whether there exists a route to reach B from A. In practical applications, in addition to reflecting the structural relationships between vertices, data graphs usually have information such as quantization weights related to the structural relationships. For example, in social networks, the edges representing user interaction relations usually contain information about the interaction time; in communication networks, the edges representing links are usually accompanied by information about the link bandwidth, and so on. Therefore, in addition to considering the graph structure, users usually pay more attention to reachability queries that satisfy specific constraints.

In existing work, Wen et el.\cite{13wen2020efficiently,14qiao2013computing,15peng2023efficiently} investigates weight-constrained reachability queries, which are used to determine whether there exists a connected path between two given points, where the values of the attributes on each edge satisfy the given constraints. For example, in order to guarantee the end-to-end quality of service in a communication network, it is necessary to ensure that all links need to satisfy a minimum bandwidth requirement over the possible transmission paths \cite{16gurukar2015commit}. In addition, \textit{k}-step reachability queries with distance constraints have been studied in the Yano et el.\cite{17yano2013fast,18yildirim2010grail,19cheng2012k,20peng2022answering}. For example, in a road network, a path with a distance less than a certain threshold between two places is recommended for a user when he or she is traveling. When using social networks to assist in infectious disease research, if there exists a path with a length less than a certain threshold between two people, i.e., the number of people in the transmission chain between two people is less than the threshold, then these two people may have a close or sub-close relationship.

In addition to the single-constraint reachability query problem studied in existing work, query requirements in real-world applications may require that both weight and distance constraints be considered. For example, in wireless or sensor networks, the probability of message reception decreases at an exponential rate with increasing transmission distance due to the fact that a message in transmission may be lost at any node\cite{19cheng2012k}. Therefore, data transmission in wireless or sensor networks, in addition to the requirement that the link bandwidth is greater than a user-defined threshold, the path length of data transmission should also satisfy a threshold set in advance; as another example, in the case of using social networks to assist in the study of infectious diseases, if there is a need to find the close and sub-close connections, in addition to the need to detect the number of people on the propagation chain, i.e., the propagation distance between two people, to see if it is less than a given threshold, it is also necessary to further test the time in the chain of transmission to see if it is close to the date of diagnosis of the infected person.

Existing methods cannot efficiently answer reachability queries when both weight and distance constraints are considered. The reason is reflected in two aspects: on the one hand, the indexes used by the existing methods for weight constraints \cite{13wen2020efficiently,14qiao2013computing,15peng2023efficiently} cannot reflect the path length, and the existing methods supporting distance \cite{17yano2013fast,18yildirim2010grail,19cheng2012k,20peng2022answering} do not contain information about the weights on the edges in the paths; on the other hand, the indexes of the two types of methods do not have specific data mapping relationships between them, and it is not possible to answer the queries that have both weight constraints and distance constraints by a simple combination of indexes of the query.

To solve these  problems, this paper investigates Weight-Constraint with \textit{K}-step Reachability (WCKR) query based on weight and distance constraints and its efficient processing. The basic idea is to design index structures that can cover the length and weight information of all the paths in the graph, based on which the results of WCKR queries are returned quickly. The specific contributions of this paper are as follows:

\begin{quote}
\setlength{\leftskip}{1em}
(1) We propose an efficient index, termed the WKRI (Weighted \textit{k}-step Reachability Index), along with a novel index construction algorithm tailored for addressing WCKR queries. This algorithm employs two advanced pruning strategies to generate an index that eliminates redundant information among all vertex pairs, thus enhancing both storage efficiency and query performance.
\end{quote}
\begin{quote}
\setlength{\leftskip}{1em}
(2) We introduce global and local indexing strategies to optimize index size and accelerate query processing. The global strategy constructs an index encompassing all vertices based on minimum point coverage, while the local strategy focuses on vertices within the minimum point coverage set. These strategies significantly improve query efficiency and reduce the overall index size.
\end{quote}
\begin{quote}
\setlength{\leftskip}{1em}
(3) Comprehensive experiments on various real-world datasets demonstrate the effectiveness of our proposed methods. The experimental results validate the efficiency and performance improvements of our approach from multiple perspectives.
\end{quote}

The rest of this paper is organized as follows. Section 2 discusses the basics as well as related work, Section 3 proposes WKRI index for WCKR queries and gives the index construction and query processing methods, Section 4 proposes two optimized indexes GWKRI and LWKRI based on the idea of minimum point coverage. Finally, experimental results are shown in Section 5 and the full paper is summarized in Section 6.

\section{Preliminaries}\label{sec2}

An undirected weighted graph \textit{G} can be represented by a quadruple as $\textit{G} 
=\left(V,E,\mathrm{\Sigma},w\right)$
 where \textit{V} and \textit{E} denote the set of vertices and the set of edges, $\mathrm{\Sigma}$ denotes the set of weights on all the edges in \textit{E}, and \textit{w} is the mapping function of edges to weights, with \textit{w}(\textit{e}) denoting the weights on edge \textit{e}. For any vertex $\textit{u}\in \textit{V}$, the set of neighboring vertices of \textit{u} is denoted as $N\left(u\right)=\left\{v\middle|\left(u,v\right)\in E\right\}$. Obviously, the degree of \textit{u} $d\left(u\right)=\left|N\left(u\right)\right|$. Any path \textit{P} connecting two vertices \textit{u} and \textit{v} can be represented as a sequence of vertices $\left(u=v_0,\ v_1,\ \ldots,{\ v}_i=v\right)$, where$\left(v_j,\ v_j+1\right)\in\ E\left(0\ \le\ j\le\ i-1\right)$. The length of \textit{P} is $len\left(P\right)= i+1$. Assuming that $P_{min}$ denotes the shortest path among all the paths between \textit{u}, \textit{v}, the shortest distance between \textit{u}, \textit{v} can be represented by  $P_{min}$, i.e., $dis\left(u,\ v\right)=\ len\left(P_{min}\right)$.

\begin{definition}
(Weight Constraint Reachable) Given a weight constraint \textit{c}, any two vertices $u,v \in V$. The point \textit{u} is said to be weight-constraint-reachable to \textit{v} if there exists a path \textit{P} from \textit{u} to \textit{v}, and the weights of all edges in \textit{P} satisfy \textit{c}.
\end{definition}

\begin{definition}
(\textit{k}-step reachable) Given a distance threshold \textit{k}, $\forall\ u,v \in\ V$. The point \textit{u} is said to be \textit{k}-step reachable to \textit{v} if there exists a path \textit{P} from \textit{u} to \textit{v} that satisfies $dis\left (P\right) \le k$.
\end{definition}

\begin{figure}[h]
    \centering
    \includegraphics[width=0.5\linewidth]{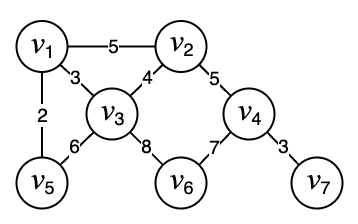}
    \caption{Weighted undirected graph\textit{ G}}
    \label{fig1}
\end{figure}

\textbf{Problem Statement. } Given an undirected weighted graph \textit{G} and a query\textit{ Q}(\textit{u, v, c, k}), where \textit{u},\textit{v} are the two vertices of the query, \textit{c} is a constraint, which can be either a semi-bounded interval $\le x$, $\geq y$, or a bounded interval [\textit{x}, \textit{y}], and \textit{k} is a distance threshold. The weight-constrained \textit{k}-step reachability query \textit{Q}(\textit{u, v, c, k}) is used to answer the question of whether there exists a path \textit{P} between vertices \textit{u}, \textit{v} where the weights on all edges in \textit{P} satisfy the weight constraints \textit{c} and $len\left(P\right)\le k$.
\medskip
\begin{example}
In the undirected weighted graph \textit{G} of Fig. \ref{fig1}, we have $v_5$ 3-step reaches $v_2$ in the bounded interval [\textit{2}, \textit{4}], since there exists a path {($v_5$, $v_1$, 2), ($v_1$, $v_3$, 3), ($v_3$, $v_2$, 4)} from $v_5$ to $v_2$ with weight constraints [\textit{2}, \textit{4}] in the graph \textit{G}.

Additionally, consider a query with a semi-bounded interval $\leq 3$. In this case, $v_5$ 2-step reaches $v_3$ because there exists a path {($v_5$, $v_1$, 2), ($v_1$, $v_3$, 3)} where the edge weight $2$ and $3$ is less than or equal to $3$, satisfying the weight constraint $\leq 3$.
\end{example}
\subsection{Related Work}
\subsubsection{Weight-Constrained Reachability Query Algorithm}
\citet{13wen2020efficiently} studied Efficient processing of Span-Reachability queries on temporal graphs . Given a temporal graph \textit{G}, two vertices \textit{u} and \textit{v}, and a time interval $\left[t_s,t_e\right]$ as a constraint, a Span-Reachability query on the temporal graph is used to detect whether \textit{u} can reach \textit{v} through an edge whose moment value falls within the interval $\left[t_s\ ,t_e\right]$. If the moment information on an edge is considered as a weight, this type of query processing problem belongs to the weight-constrained-reachability query problem. The authors propose a TILL index to support efficient processing of  Span-Reachability queries on temporal graphs.

\citet{14qiao2013computing} proposes a binary index tree based on edge weights to address the weight-constrained reachability (WCR) problem in ordinary graphs. The construction of the edge-indexed binary tree, for instance with a constraint $\le y$, involves two main steps. First, the input weighted graph is used to construct a minimum spanning tree \textit{T}. Then, the edge-indexed binary tree is built based on \textit{T}.The construction process of the edge-indexed binary tree is as follows: (1) Identify the edge \textit{e} with the largest weight in \textit{T} and designate it as the root node of the edge-indexed binary tree. (2) Remove edge \textit{e} from \textit{T}, which splits \textit{T} into two subtrees, $T_1$ and $T_2$. These steps are repeated until all edges and vertices of \textit{T} have been removed. Finally, all vertices from \textit{T} are inserted into their appropriate positions within the edge-indexed binary tree in a bottom-up manner.For a given query, a vertex \textit{u} is weight-constrained reachable to another vertex \textit{v} if the weight of the lowest common ancestor of \textit{u} and \textit{v} in the edge-indexed binary tree satisfies the constraint $c$.

\citet{15peng2023efficiently} studied the shortest distance query problem with quality constraints in large graphs. For real-life graphs, if the quality values on the edges (e.g., bandwidth in a routing network) are regarded as weights, when given a quality constraint, the edges that satisfy the constraint are valid edges. The article focuses on investigating whether it is possible to find a shortest path along a valid edge for two vertices \textit{s} and \textit{t} given in a graph. Although the problem is a shortest distance query problem, the core idea is still to construct a 2-hop index based on weights and perform a fast shortest distance query based on it, it is similar to the weight-constrained reachability problem.

The above mentioned \citet{13wen2020efficiently,15peng2023efficiently} constructed 2-hop indexes to answer the reachability as well as the shortest distance problem, while \citet{14qiao2013computing} utilized tree indexes to answer the reachability problem. However, none of these three methods are applicable to solve the WCKR problem because none of the indexes constructed by these three methods take into account the constraints of path length.
\subsubsection{\textit{K}-step Reachability Query Algorithm}
For the \textit{k}-step reachability query problem, traditional solutions include the PLL \cite{17yano2013fast} algorithm as well as the K-Reach \cite{19cheng2012k} algorithm.

PLL can answer a \textit{k}-step reachability query by comparing the length of the shortest path between two points with a given distance threshold by recording the length of the shortest path in the index label entries. The index label of each vertex \textit{u} in the PLL index can be expressed as $L\left(u\right)=\left\{\left({hop}_1,\ {dis}_1\right ),\ \left({hop}_2,\ {dis}_2\right),\ \ \ldots\ ,\left({hop}_l,\ {dis}_l\right)\right\}$, where $({hop}_i,\ {dis}_i)$ denotes that the length of the shortest path from vertex \textit{u} to vertex ${hop}_i$ is ${dis}_i (1\le i \le l)$. When processing the query \textit{Q}(\textit{u}, \textit{v}, \textit{k}), first determine whether there exists a common vertex ${hop}_i$  in \textit{L}(\textit{u}) and \textit{L}(\textit{v}), and if so, further determine whether the sum of the shortest distance from \textit{u} to ${hop}_i$  and that from \textit{v} to ${hop}_i$  is$\le k$. If the condition is satisfied, then it means that \textit{u} reaches the vertex \textit{v} in step \textit{k}. To build the index, a BFS traversal is required from each hop point ${hop}_i$ . The PLL algorithm constructs indexes for the \textit{k}-step reachability query problem on ordinary graphs. The indexes contain only the length information of the paths, and thus cannot answer the  \textit{k}-step reachability query with weight constraints.

K-Reach constructs a \textit{k}-step passing closure covering all vertices in the set based on the minimum vertex cover and records the distance between each vertex and the vertices in its passing closure. When answering a \textit{k}-step reachable query, the query vertices are processed separately based on whether they belong to the minimum vertex coverage or not. As the value of \textit{k} increases, the size of the passing closure for each vertex rises exponentially, thus the algorithm has a large index size. On the other hand, the authors have also proposed an optimization method that trades time for space to reduce the index size. However, like PLL, K-Reach maintains distances in the index and has no information about the weights on the edges in the path, and thus cannot handle \textit{k}-step reachable queries with weight constraints.

\citet{20peng2022answering} proposed a novel approach to answer the \textit{k}-step reach problem with label constraints based on the above two algorithms. In this paper, the authors proposed  a search algorithm based on upper and lower bounds to answer the K-Reach query problem. It is the first algorithm that uses 2-hop indexes as upper and lower bounds to answer K-Reach queries. Although the method combines the 2-hop index and the K-Reach algorithm, its main idea is to solve the label-constrained reachability problem, and thus it cannot deal with weight-constrained k-step reachability queries, but the idea is still worthwhile.
\subsubsection{Other Reachability Query Algorithm}
In addition to the above reachability query processing methods for weights constraints and k-steps constraints, researchers have also conducted in-depth studies on other types of reachability queries, including parallel processing methods for unconstrained reachability queries \cite{21li2019scaling,22li2020scaling}, sink-graph reachability query processing methods based on graph parsimony \cite{23dietrich2021efficient}, label-constrained reachability query processing methods based on dynamic graphs \cite{24chen2022dlcr}, and label-constrained reachability query processing methods based on distributed computing \cite{25zeng2022distributed}.  These methods effectively extend the application scope of reachability queries.

\section{WKRI Algorithm}\label{sec3}
Given a weighted undirected graph $\textit{G} 
=\left(V,E,\mathrm{\Sigma},w\right)$ and a query \textit{Q}(\textit{u}, \textit{v}, \textit{c}, \textit{k}), the most straightforward way to solve the WCKR query problem is to perform a BFS/DFS traversal \cite{26bundy1986catalogue,27veloso2014reachability}. Specifically, for each query \textit{Q}, we performed a BFS traversal from vertex \textit{u}. In the traversal, only edges that satisfy the weight constraint \textit{c} are traversed, and vertex \textit{u} is reachable to vertex \textit{v} if vertex \textit{v} can be reached and the path length also satisfies the distance constraint \textit{k}.

Although BFS or DFS traversal can solve the WCKR query problem, the query process may access all the vertices in the graph, and thus the query is inefficient and poorly scalable. In this section, we propose a novel algorithm——WKRI, which can efficiently solve the WCKR query problem.

\subsection{WKRI Index}
The 2-hop index records the shortest path length, which is used to solve the shortest distance query problem and can also be used to answer the \textit{k}-step reachability query problem. For example,  for the weighted undirected graph \textit{G} of Fig. 1, the corresponding 2-hop distance index is shown in Table~\ref{tab1}. For each vertex \textit{u}, each row in the table denotes the label \textit{L}(\textit{u}) of that vertex, and each item \textit{I} = (\textit{w},\textit{ k}) in \textit{L}(\textit{u}) is a binary indicating that the shortest distance from vertex \textit{w} to vertex \textit{u} is \textit{k}. Answering whether \textit{u} can \textit{k}-step-reachable to \textit{v} based on this index can be described simply as follows: given a query \textit{Q} = (\textit{u}, \textit{v}, \textit{k}), if there exists $\left(w_i,k_1\right)\in\ L\left(u\right)$, $\left(w_j,k_2\right)\in\ L\left(v\right)$ that satisfies $w_i=\ w_j$ and $k_1\ +\ k_2\ \le\ k$, then the result of the query \textit{Q} will be true, and false otherwise.

\begin{table}[htbp]
  \centering
  \caption{2-hop distance index}
  \renewcommand{\arraystretch}{1.5}
    \begin{tabular}{c|cccc}
    \hline
    ID    & \multicolumn{4}{c}{Label} \\
    \hline
    \textit{L}(\textit{v}\textsubscript{1})  & (\textit{v}\textsubscript{3},1) & (\textit{v}\textsubscript{4},2) & (\textit{v}\textsubscript{2},1) & \multicolumn{1}{c}{(\textit{v}\textsubscript{1},0)} \\
    \hline
    \textit{L}(\textit{v}\textsubscript{2})  & (\textit{v}\textsubscript{3},1) & (\textit{v}\textsubscript{4},1) & (\textit{v}\textsubscript{2},0) &  \\
    \hline
    \textit{L}(\textit{v}\textsubscript{3})  & (\textit{v}\textsubscript{3},0) & \multicolumn{1}{c}{} & \multicolumn{1}{c}{} &  \\
    \hline
    \textit{L}(\textit{v}\textsubscript{4})  & (\textit{v}\textsubscript{3},2) & (\textit{v}\textsubscript{4},0) & \multicolumn{1}{c}{} &  \\
    \hline
    \textit{L}(\textit{v}\textsubscript{5})  & (\textit{v}\textsubscript{3},1) & (\textit{v}\textsubscript{1},1) & (\textit{v}\textsubscript{5},0) &  \\
    \hline
    \textit{L}(\textit{v}\textsubscript{6})  & (\textit{v}\textsubscript{3},1) & (\textit{v}\textsubscript{4},1) & (\textit{v}\textsubscript{6},0) &  \\
    \hline
    \textit{L}(\textit{v}\textsubscript{7})  & (\textit{v}\textsubscript{3},3) & (\textit{v}\textsubscript{4},1) & (\textit{v}\textsubscript{7},0) &  \\
    \bottomrule
    \end{tabular}%
    \begin{center}
\parbox{0.8\textwidth}{The vertex processing order is $v_3, v_4, v_2, v_1, v_6, v_5, v_7$.
}
\end{center}
  \label{tab1}%
\end{table}%
As can be seen from Table~\ref{tab1}, the 2-hop distance index only retains information about the length of the shortest path between two points. If the WCKR query is answered by adding the weights on the edges in the shortest paths to each item in this index, the information about the weights of the non-shortest paths will be missed, resulting in inaccurate query results. For example, a query through Table~\ref{tab1} shows that the path from $v_1$ to $v_3$ is only $v_1 \rightarrow v_3$, which is also the shortest path from $v_1$ to $v_3$. However, according to Fig. 1, it is known that there are other non-shortest paths from $v_1$ to $v_3$, such as $v_1 \rightarrow v_5 \rightarrow v_3$, $v_1 \rightarrow v_2 \rightarrow v_3$, and so on. If a given query needs to detect whether there exists a path between vertices $v_1$ to $v_3$ with a length not exceeding 2 and a weight between 4 and 6, it can be seen from Figure 1 that the path that satisfies the condition is $v_1 \rightarrow v_2 \rightarrow v_3$. However, only the information of the shortest paths is recorded in Table~\ref{tab1}. For this query, the path that satisfies the condition is not recorded in Table~\ref{tab1}. If weight information is added only to the index entries in Table~\ref{tab1}, then weight information on non-shortest paths such as $v_1 \rightarrow v_5 \rightarrow v_3$, $v_1 \rightarrow v_2 \rightarrow v_3$, etc., will be missed. Therefore, the index used to answer the WCKR query needs to cover the weight information corresponding to all paths. Taking Fig. 1 as an example, BFS traversal is used to visit all the vertices, and all the paths traveled during the traversal and the weight information are recorded. Part of the constructed index (contains only three-hop vertices $v_1, v_2, v_3$) is shown in Table~\ref{tab2}. For each vertex \textit{u}, each row in the table represents the path label \textit{L}(\textit{u}) of that vertex, where each label item $I=\left(v,w_s,w_e,k\right)$ is a quaternion representing a path \textit{P} from \textit{u} to \textit{v}, which has the range of the weights on the edges of the path \textit{P} as $\left[w_s,w_e\right]$, and the distance of the path is \textit{k}. Although this index can correctly answer the WCKR query, the index construction method based on BFS traversal has both time and space complexity as high as $O\left(\left|V\right|^2\right)$, which leads to a large index size and low query efficiency in practice. For this reason, efficient redundant label identification methods need to be designed to compress the index size and improve the query efficiency.

\begin{definition}
(Path Tuple) For any vertex \textit{u}, each label item $I=\left(v,w_s,w_e,k\right)$ in \textit{L}(\textit{u}) denotes a reachable path \textit{P} from \textit{u} to \textit{v}, which can be expressed in terms of a path tuple as $T=\left(u,v,w_s,w_e,k\right)$.
\end{definition}

Each label item in the index can unidirectionally generate a path tuple, indicating that the path corresponding to the path tuple is reachable.

\begin{definition}
(Path Domination) Given two path tuples $t_1=\left(u,\ v,w_s,w_e,\ k\right)$ and $t_2=\left(u,v,w_s^1,w_e^1,k_1\right)$, a path tuple $t_1$ is said to dominate $t_2$ if $\left[w_s,w_e\right]\subseteq\left[w_s^1,\ {\ w}_e^1\right]$ and $k \le\ k_1$.
\end{definition}

For the path tuples $t_1$ and $t_2$ in Definition 4, we find that if $t_1$ denotes that vertex \textit{u} to vertex \textit{v} is reachable in \textit{k} steps within $\left[w_s,w_e\right]$, then when $t_1$ and $t_2$ satisfy the governing condition that $\left[w_s,w_e\right]\subseteq\left[w_s^1,{w}_e ^1\right]$ and $k\le k_1$, vertex \textit{u} to vertex \textit{v} must also be reachable in $k_1$ steps within $\left[w_s^1,w_e^1\right]$, so the reachable path information represented by the tuple $t_2$ is in fact redundant information, and when the path tuple is generated, we can express its reachable information with $t_1$ instead, so $t_2$ is in fact is a redundant tuple, and the label item that generates this tuple is also a redundant label item. That is, when the path tuple $t_1$ dominates $t_2$, $t_2$ is a redundant path tuple, and its corresponding label item is also a redundant label item.

\begin{theorem}
(Redundancy detection rules for label items at the same hop point) If there is a label item $I=\left(v,w_s,w_e,k\right)$ in the label \textit{L}(\textit{u}) of \textit{u}, $I_1$ is redundant when the label item $I_1\ =\left(v,w_s^1,\ {w}_e^1,k_1\right)$ satisfies $[w_s,w_e] \subseteq [w_s^1,{w}_e ^1]$ and $k\le k_1$.
\end{theorem}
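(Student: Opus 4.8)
The plan is to reduce the statement to the path-domination mechanism already set up in Definition 4 together with the discussion immediately preceding the theorem. First I would make the two generated path tuples explicit. By Definition 3, the item $I=(v,w_s,w_e,k)$ in $L(u)$ yields the tuple $t_1=(u,v,w_s,w_e,k)$, which certifies a reachable path $P$ from $u$ to $v$ all of whose edge weights lie in $[w_s,w_e]$ and whose length is $k$; likewise $I_1=(v,w_s^1,w_e^1,k_1)$ yields $t_2=(u,v,w_s^1,w_e^1,k_1)$. Since the hypotheses $[w_s,w_e]\subseteq[w_s^1,w_e^1]$ and $k\le k_1$ are precisely the two conditions of Definition 4, $t_1$ dominates $t_2$. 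The task then becomes showing that a dominated tuple supplies no reachability information beyond that of its dominator.

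The core step is to verify that every query answered affirmatively via $I_1$ is already answered affirmatively via $I$. I would fix an arbitrary query $Q(u,v,c,k')$ that $I_1$ certifies as reachable. By the query semantics, this means the recorded weight range of $I_1$ is admissible under the edge constraint, i.e. $[w_s^1,w_e^1]\subseteq c$ (reading each admissible form $\le x$, $\ge y$, $[x,y]$ as an interval), and the recorded length satisfies $k_1\le k'$. I would then chain the inclusions: from $[w_s,w_e]\subseteq[w_s^1,w_e^1]\subseteq c$ the path $P$ behind $I$ likewise has all its edge weights satisfying $c$, and from $k\le k_1\le k'$ its length stays within the distance threshold. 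Hence $I$ certifies exactly the same query, and since $c$ and $k'$ were arbitrary, the set of queries witnessed by $I_1$ is contained in the set witnessed by $I$.

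The one point that deserves care — and the closest thing to an obstacle — is the \emph{direction} of the inclusion: it is the item with the \emph{narrower} weight interval that is the stronger one. I would make this explicit by noting two monotonicities: shrinking $[w_s^1,w_e^1]$ down to $[w_s,w_e]$ can only enlarge the family of admissible constraints $c$, because a path whose weights sit in a tighter band meets at least as many edge constraints; and shortening the length from $k_1$ to $k$ can only enlarge the family of admissible thresholds $k'$. Combining these shows the reachability information of $I_1$ is subsumed by that of $I$. Therefore $t_2$ is a redundant path tuple and, by the remark following Definition 4, the label item $I_1$ that generates it is a redundant label item, which is exactly the claim.
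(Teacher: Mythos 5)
Your proposal is correct and follows essentially the same route as the paper: both convert $I$ and $I_1$ to the path tuples $(u,v,w_s,w_e,k)$ and $(u,v,w_s^1,w_e^1,k_1)$, observe that the hypotheses are exactly the domination conditions of Definition 4, and conclude redundancy from the remark that dominated tuples are redundant. Your additional query-by-query monotonicity check merely makes explicit the justification the paper leaves implicit in the discussion following Definition 4, so no substantive difference arises.
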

\begin{proof}
The label item $\textit{I}_1$ can be represented as the path tuple $\left(u,v,w_s^1,w_e^1,k_1\right)$, and the labeling item \textit{\textbf{I}} can be represented as the path tuple $\left(u,v,w_s,w_e,k\right)$.  According to the known conditions, we know that $\left(u,v,w_s,w_e,k\right)$ dominates $\left(u,v,w_s^1,w_e^1,k_1\right)$. By inference from Definition 4, $I_1$ is a redundant label item.
\end{proof}

\begin{theorem}
(Redundancy detection rules for label items at different hop points) Suppose that after the hop vertex \textit{x} is processed, there exists a label item $I_1=\left(x,w_s^1,w_e^1,k_1\right)$ in the label \textit{L}(\textit{u}) of \textit{u}, and a label item $I_2=\left(x,w_s^2,w_e^2,k_2\right)$ in the label \textit{L}(\textit{v}) of \textit{v}. When dealing with hop point \textit{v}, if the labeling term $I_3=\left(v,w_s^3,w_e^3,k_3\right)$ is generated at point \textit{u}, $I_3$ satisfies $\left[\min{\left(w_s^1,{\ w}_s^2\right)},\max{\left(w_e^1,{\ w}_e^2\right)}\right]\subseteq\left[w_s^3,{\ w}_e^3\right]$ and $k_1+k_2\le k_3$, then $I_3$ is a redundant label item.
\end{theorem}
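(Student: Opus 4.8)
The plan is to reduce the claim to the domination relation of Definition 4 by exhibiting a single path tuple from \textit{u} to \textit{v} that is already derivable from the existing labels $I_1$ and $I_2$ and that dominates the tuple generated by $I_3$. Once such a dominating tuple is produced, the redundancy of $I_3$ follows from the argument given immediately after Definition 4 (a dominated tuple, and hence its generating label item, is redundant), exactly in the spirit of the proof of Theorem 1.

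First I would read off the paths certified by the two given labels: $I_1=\left(x,w_s^1,w_e^1,k_1\right)\in L(u)$ certifies a path $P_1$ from \textit{u} to \textit{x} all of whose edge weights lie in $\left[w_s^1,w_e^1\right]$ with $len\left(P_1\right)=k_1$, and $I_2=\left(x,w_s^2,w_e^2,k_2\right)\in L(v)$ certifies a path $P_2$ from \textit{v} to \textit{x} with edge weights in $\left[w_s^2,w_e^2\right]$ and $len\left(P_2\right)=k_2$. Since \textit{G} is undirected, concatenating $P_1$ with the reverse of $P_2$ at the common vertex \textit{x} yields a \textit{u}-to-\textit{v} walk \textit{P} with $len\left(P\right)=k_1+k_2$; and because every edge of \textit{P} is inherited from either $P_1$ or $P_2$, the smallest and largest edge weights on \textit{P} are $\min\left(w_s^1,w_s^2\right)$ and $\max\left(w_e^1,w_e^2\right)$ respectively, so the weight range of \textit{P} is $\left[\min\left(w_s^1,w_s^2\right),\max\left(w_e^1,w_e^2\right)\right]$. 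By Definition 3 this yields the path tuple $t=\left(u,v,\min\left(w_s^1,w_s^2\right),\max\left(w_e^1,w_e^2\right),k_1+k_2\right)$, which is recoverable at query time by combining $I_1$ and $I_2$ through their common hop \textit{x}.

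Finally I would compare \textit{t} against the tuple $t_3=\left(u,v,w_s^3,w_e^3,k_3\right)$ generated by $I_3$. The two hypotheses of the theorem, namely $\left[\min\left(w_s^1,w_s^2\right),\max\left(w_e^1,w_e^2\right)\right]\subseteq\left[w_s^3,w_e^3\right]$ and $k_1+k_2\le k_3$, are precisely the conditions of Definition 4 for \textit{t} to dominate $t_3$; hence $t_3$ is a redundant path tuple and $I_3$ is a redundant label item. The step I expect to demand the most care is the concatenation argument: I must confirm that gluing $P_1$ and the reversed $P_2$ at \textit{x} produces an admissible reachability witness in the sense used for this index, and that the recorded interval $\left[w_s,w_e\right]$ is genuinely the tightest range of a path's edge weights, so that the $\min/\max$ combination is exact rather than a loose over-estimate. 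Once that modeling point is pinned down, the domination check and the appeal to Definition 4 are routine.
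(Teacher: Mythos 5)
Your proposal is correct and follows essentially the same route as the paper's own proof: both merge $I_1$ and $I_2$ at the common hop \textit{x} into the combined path tuple $\left(u,v,\min\left(w_s^1,w_s^2\right),\max\left(w_e^1,w_e^2\right),k_1+k_2\right)$ and then invoke the domination relation of Definition 4 to conclude that $I_3$ is redundant. Your additional care about reversing $P_2$ in the undirected setting and about the tightness of the recorded weight interval is a reasonable refinement, but it does not change the argument.
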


\begin{proof}
The positional relations of vertices \textit{u}, \textit{v}, \textit{x} are shown in Fig.2. $\textit{I}_1$ can be represented as path tuple $(u,x,w_s^1,w_e^1,k_1)$, $I_2$ can be represented as path tuple $(v,x,w_s^2,w_e^2,k_2)$ , and $I_3$ can be represented as path tuple $(u,v,w_s^3,w_e^3,k_3)$. Since $I_1$ and $I_2$ represent the paths of $u\rightarrow x$ and $x\rightarrow v$, respectively, $I_1$ and $I_2$ can be merged to represent the paths of $u\rightarrow v$. And the path tuples can be combined as $(u,v,min(w_s^1,w_s^2),max(w_e^1,w_e^2),k_1+k_2)$. Since $[min(w_s^1,{w}_s^2),max(w_e^1,{w}_e^2)] \subseteq [w_s^3, w_e^3]$ and $k_1+k_2\le k_3$, $(u,v,w_s^3,w_e^3,k_3)$ is dominated. It is a redundant path tuple, and its corresponding label, $I_3$, is redundant.    
\end{proof}

\begin{figure}[h]
    \centering
    \includegraphics[width=0.5\linewidth]{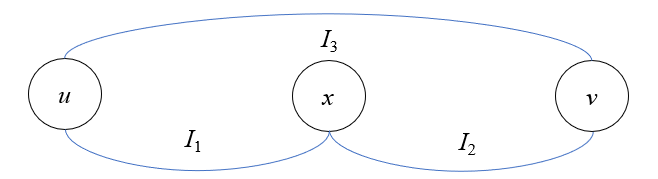}
    \caption{The positional relationships between \textit{u}, \textit{v} and \textit{x} }
    \label{fig2}
\end{figure}

\begin{example}
Theorem 1 and Theorem 2 can be used to detect whether the labels in Table \ref{tab2} are redundant or not, and then compress the index size by removing the redundant labels. For example, for \textit{L}(\textit{v\textsubscript{4}}) of Table \ref{tab2}, since the reachable tuples corresponding to ($v_3$, 4, 5, 2) dominate the reachable tuples corresponding to ($v_3$, 3, 5, 3) and ($v_3$, 2, 6, 4), according to Theorem 1, ($v_3$, 3, 5, 3) and ($v_3$, 2, 6, 4) are redundant labels. Further, consider the label item ($v_4$, 4, 6, 3) in L(\textit{v\textsubscript{5}}). Since there are ($v_3$, 4, 5, 2) in L(\textit{v\textsubscript{4}}) and ($v_3$, 6, 6, 1) in L(\textit{v\textsubscript{5}}), and since $\left[min\left(4,6\right),max\left(5,6\right)\right]\subseteq\left[4,6\right]$ and $2+1\\\le3$, according to Theorem 2, ($v_4$, 4, 6, 3) in L(\textit{v\textsubscript{5}}) is a redundant labeling term. Similarly, Theorem 1 and Theorem 2 can be used to detect whether other label items are redundant, and finally the index without redundant label items is obtained.
\end{example}
\begin{table}[htbp]
  \centering
  \caption{Part of index of all paths}
    \begin{tabular}{cccc}
    \toprule
    Label & $v_1$ & $v_2$ & $v_3$ \\
    \midrule
    \textit{L}($v_1$) & ($v_1$,0,0,0) & \multicolumn{1}{c}{} & \multicolumn{1}{c}{} \\
    \midrule
    \multicolumn{1}{c}{\multirow{2}[2]{*}{\textit{L}($v_2$)}} & ($v_1$,5,5,1) ($v_1$,2,6,3) & \multirow{2}[2]{*}{($v_2$,0,0,0)} & \multicolumn{1}{c}{\multirow{2}[2]{*}{}} \\
          & ($v_1$,3,4,2) ($v_1$,2,8,5) ($v_1$,3,8,4) & \multicolumn{1}{c}{} & \multicolumn{1}{c}{} \\
    \midrule
    \multicolumn{1}{c}{\multirow{2}[2]{*}{\textit{L}($v_3$)}} & ($v_1$,3,3,1) ($v_1$,5,8,4) & ($v_2$,4,4,1)($v_2$,2,6,3) & \multirow{2}[2]{*}{($v_3$,0,0,0)} \\
          & ($v_1$,2,6,2) ($v_1$,4,5,2) & ($v_2$,3,5,2) ($v_2$,5,8,3) & \multicolumn{1}{c}{} \\
    \midrule
    \multicolumn{1}{c}{\multirow{2}[2]{*}{\textit{L}($v_4$)}} & ($v_1$,5,5,2) ($v_1$,3,8,3) & ($v_2$,5,5,1) ($v_2$,3,8,4) & ($v_3$,4,5,2) ($v_3$,7,8,2) \\
          & ($v_1$,2,6,4) ($v_1$,3,5,3) ($v_1$,2,8,4) & ($v_2$,4,8,3) ($v_2$,2,8,5) & ($v_3$,3,5,3) ($v_3$,2,6,4) \\
    \midrule
    \multicolumn{1}{c}{\multirow{2}[2]{*}{\textit{L}($v_5$)}} & ($v_1$,2,2,1) ($v_1$,4,6,3) & ($v_2$,2,5,2) ($v_2$,3,6,3) & ($v_3$,6,6,1) ($v_3$,2,3,2) \\
          & ($v_1$,3,6,2) ($v_1$,5,8,5) & ($v_2$,2,8,5) ($v_2$,4,6,2) ($v_2$,5,8,4) & ($v_3$,2,5,3) ($v_3$,2,8,5) \\
    \midrule
    \multicolumn{1}{c}{\multirow{2}[2]{*}{\textit{L}($v_6$)}} & ($v_1$,3,8,2) ($v_1$,2,8,3) ($v_1$,5,7,3) & ($v_2$,3,8,3) ($v_2$,4,8,2) & ($v_3$,8,8,1) ($v_3$,4,7,3) \\
          & ($v_1$,2,7,5) ($v_1$,3,7,4) ($v_1$,4,8,3) & ($v_2$,2,8,4) ($v_2$,5,7,2) & ($v_3$,3,7,4)($v_3$,2,7,5) \\
    \midrule
    \multicolumn{1}{c}{\multirow{2}[2]{*}{\textit{L}($v_7$)}} & ($v_1$,3,5,3) ($v_1$,3,5,4) ($v_1$,2,8,5) & ($v_2$,3,5,2) ($v_2$,3,8,4) & ($v_3$,3,5,3) ($v_3$,3,8,3) \\
          & ($v_1$,3,8,5) ($v_1$,3,8,4) ($v_1$,2,6,5) & ($v_2$,2,8,6) ($v_2$,3,8,5) & ($v_3$,3,5,4) ($v_3$,2,6,5) \\
    \bottomrule
    \end{tabular}%
    \label{tab2}
\end{table}%

\subsection{Index Construction}
Based on Theorem 1 and Theorem 2, we propose the WKRI algorithm for constructing indexes without redundant labels to achieve efficient processing of k-step reachable queries on weighted graphs. The index construction process is as follows: first sort the vertices in descending order according to degree, then start from each vertex v, perform BFS traversal operation, and when traversing to the already processed hop vertices, end the traversal and perform pruning. At the same time, the redundancy detection rule based on Theorem 1 and Theorem 2 is used in the traversal process to prune the indexes containing weight intervals and path lengths. This pruning rule can be simply described as follows: during the traversal process starting from vertex \textit{v}, if vertex \textit{u} is encountered, the corresponding label item $I=(v,w_s,w_e,k)$ is generated. If \textbf{\textit{I} }is a redundant label item, the label item will be directly discarded, and the traversal will stop at vertex \textit{u}. If there are certain label items in the existing index, and these labeling items become redundant due to the existence of \textit{\textbf{I}}, the redundant label items that have already been inserted will be replaced by  \textit{\textbf{I}}. The specific flow of the WKRI algorithm is shown in Algorithm 1.

\begin{algorithm}
\caption{WKRI Index Construction Algorithm}
\begin{algorithmic}[1]
\renewcommand{\algorithmicrequire}{\textbf{Input:}}
\renewcommand{\algorithmicensure}{\textbf{Output:}}
\Require Graph $\textit{G} = (V, E, \Sigma, w)$
\Ensure Index label \textit{L}(\textit{u}) for each vertex \textit{u}
\State Sort all vertices by degree in descending order
\For{each vertex $\textit{v} \in \textit{V}$}
        \State Perform BFS starting from \textit{v}. For each lower-ranked vertex \textit{u} visited:
        \State $I \gets (v, w_s, w_e, k)$
        \If{isRedundant(\textit{I}) is false}
                \State Add \textit{I} to \textit{L}(\textit{u}) and expand \textit{u} in the BFS
        \Else
                \State Do not add \textit{I} to \textit{L}(\textit{u}) nor expand \textit{u}
        \EndIf
\EndFor
\State
\State \textbf{Function} \textit{isRedundant}(I)
\For{each $I_1 = (v, w_s^1, w_e^1, k_1) \in L(u)$}
        \If{$[w_s, w_e] \subseteq [w_s^1, w_e^1] \land k \leq k_1$}
                \State Delete $I_1$
        \ElsIf{$[w_s^1, w_e^1] \subseteq [w_s, w_e] \land k_1 \leq k$}
                \State \Return true
        \EndIf
\EndFor
\For{each $(x, w_s^2, w_e^2, k_2) \in L(u)$ and $(y, w_s^3, w_e^3, k_3) \in L(v)$}
        \If{$x = y \land [\min(w_s^2, w_s^3), \max(w_e^2, w_e^3)] \subseteq [w_s, w_e] \land k_2 + k_3 \leq k$}
                \State \Return true
        \EndIf
\EndFor
\State \Return false
\end{algorithmic}
\end{algorithm}

\begin{example}
    Consider the undirected weighted graph \textit{G} shown in Fig. 1, which is sorted according to degree and the vertex processing order is $O = \{v_3, v_4, v_2, v_1, v_6, v_5, v_7\}$. The construction processing of the index shown in Table \ref{tab3} is as follows: first, a BFS traversal will be performed from $v_3$ as the starting point, and a label item $I= (v_3, w_s, w_e, k)$ will be firstly generated. Then the function isRedundant() is called to check whether \textbf{I} is redundant or not. If \textbf{I} is not a redundant label item, then \textbf{I} will be inserted into the index label of $v$. According to Fig. 1, there are multiple paths between $v_3$ and $v_2$. Suppose that the label item $(v_3, 4, 4, 1)$ has already been deposited into $L(v_2)$, and when the path $v_3 \rightarrow v_1 \rightarrow v_2$ is accessed, its label term is $(v_3, 3, 5, 2)$. According to the pruning rule, this label term is a redundant label term and cannot be inserted into $L(v_2)$, at the same time, it is no longer traversed outward from $v_2$. The other vertices are treated similarly. Each row of Table \ref{tab3} represents the label $L(u)$ of vertex $u$, and each column corresponds to a vertex $v$, whose value represents the set of label items associated with $v$ in $L(u)$.
\end{example}

\begin{table}[htbp]
  \centering
  \caption{WKRI index based on WKRI algorithm}
    \begin{tabular}{cccccccc}
    \toprule
    Label    & \textit{v}\textsubscript{3} & \textit{v}\textsubscript{4} & \multicolumn{1}{c}{\textit{v}\textsubscript{2}} & \multicolumn{1}{c}{\textit{v}\textsubscript{1}} & \multicolumn{1}{c}{\textit{v}\textsubscript{6}} & \multicolumn{1}{c}{\textit{v}\textsubscript{5}} & \textit{v}\textsubscript{7} \\
    \midrule
    \multirow{3}[2]{*}{\textit{L}(\textit{v}\textsubscript{1})} & (\textit{v}\textsubscript{3},3,3,1) & \multirow{3}[2]{*}{(\textit{v}\textsubscript{4},5,5,2)} & \multicolumn{1}{c}{\multirow{3}[2]{*}{(\textit{v}\textsubscript{2},5,5,1)}} & \multicolumn{1}{c}{\multirow{3}[2]{*}{(\textit{v}\textsubscript{1},0,0,0)}} & \multirow{3}[2]{*}{} & \multirow{3}[2]{*}{} & \multicolumn{1}{c}{\multirow{3}[2]{*}{}} \\
    \multicolumn{1}{c}{} & (\textit{v}\textsubscript{3},4,5,2) & \multicolumn{1}{c}{} &       &       &       &       & \multicolumn{1}{c}{} \\
    \multicolumn{1}{c}{} & (\textit{v}\textsubscript{3},5,8,4) & \multicolumn{1}{c}{} &       &       &       &       & \multicolumn{1}{c}{} \\
    \midrule
    \multirow{2}[2]{*}{\textit{L}(\textit{v}\textsubscript{2})} & (\textit{v}\textsubscript{3},4,4,1) & \multirow{2}[2]{*}{(\textit{v}\textsubscript{4},5,5,1)} & \multicolumn{1}{c}{\multirow{2}[2]{*}{(\textit{v}\textsubscript{2},0,0,0)}} & \multirow{2}[2]{*}{} & \multirow{2}[2]{*}{} & \multirow{2}[2]{*}{} & \multicolumn{1}{c}{\multirow{2}[2]{*}{}} \\
    \multicolumn{1}{c}{} & (\textit{v}\textsubscript{3},5,8,3) & \multicolumn{1}{c}{} &       &       &       &       & \multicolumn{1}{c}{} \\
    \midrule
    \textit{L}(\textit{v}\textsubscript{3})  & (\textit{v}\textsubscript{3},0,0,0) & \multicolumn{1}{c}{} &       &       &       &       & \multicolumn{1}{c}{} \\
    \midrule
    \multirow{2}[2]{*}{\textit{L}(\textit{v}\textsubscript{4})} & (\textit{v}\textsubscript{3},4,5,2) & \multirow{2}[2]{*}{(\textit{v}\textsubscript{4},0,0,0)} & \multirow{2}[2]{*}{} & \multirow{2}[2]{*}{} & \multirow{2}[2]{*}{} & \multirow{2}[2]{*}{} & \multicolumn{1}{c}{\multirow{2}[2]{*}{}} \\
    \multicolumn{1}{c}{} & (\textit{v}\textsubscript{3},7,8,2) & \multicolumn{1}{c}{} &       &       &       &       & \multicolumn{1}{c}{} \\
    \midrule
    \multirow{2}[2]{*}{\textit{L}(\textit{v}\textsubscript{5})} & (\textit{v}\textsubscript{3},6,6,1) & \multirow{2}[2]{*}{(\textit{v}\textsubscript{4},2,5,3)} & \multicolumn{1}{c}{\multirow{2}[2]{*}{(\textit{v}\textsubscript{2},2,5,2)}} & \multicolumn{1}{c}{\multirow{2}[2]{*}{(\textit{v}\textsubscript{1},2,2,1)}} & \multirow{2}[2]{*}{} & \multicolumn{1}{c}{\multirow{2}[2]{*}{(\textit{v}\textsubscript{5},0,0,0)}} & \multicolumn{1}{c}{\multirow{2}[2]{*}{}} \\
    \multicolumn{1}{c}{} & (\textit{v}\textsubscript{3},2,3,2) & \multicolumn{1}{c}{} &       &       &       &       & \multicolumn{1}{c}{} \\
    \midrule
    \multirow{2}[2]{*}{\textit{L}(\textit{v}\textsubscript{6})} & (\textit{v}\textsubscript{3},8,8,1) & \multirow{2}[2]{*}{(\textit{v}\textsubscript{4},7,7,1)} & \multirow{2}[2]{*}{} & \multirow{2}[2]{*}{} & \multicolumn{1}{c}{\multirow{2}[2]{*}{(\textit{v}\textsubscript{6},0,0,0)}} & \multirow{2}[2]{*}{} & \multicolumn{1}{c}{\multirow{2}[2]{*}{}} \\
    \multicolumn{1}{c}{} & (\textit{v}\textsubscript{3},4,7,3) & \multicolumn{1}{c}{} &       &       &       &       & \multicolumn{1}{c}{} \\
    \midrule
    \textit{L}(\textit{v}\textsubscript{7})  & (\textit{v}\textsubscript{3},3,5,3) & (\textit{v}\textsubscript{4},3,3,1) &       &       &       &       & (\textit{v}\textsubscript{7},0,0,0) \\
    \bottomrule
    \end{tabular}%
  \label{tab3}%
\end{table}%

\begin{theorem}
    The WKRI algorithm constructs indexes that do not contain redundant path information.
\end{theorem}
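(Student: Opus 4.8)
The plan is to prove the statement as an invariant maintained throughout the execution of Algorithm 1, by induction on the sequence of label insertions. I would show that after every step the current index contains no redundant label item, where by Definition 4 a stored item $I=(v,w_s,w_e,k)\in L(u)$ is redundant exactly when the path tuple $(u,v,w_s,w_e,k)$ it generates is dominated either by the tuple of another stored item $(v,w_s',w_e',k')\in L(u)$ with the same hop point (the situation of Theorem 1), or by the combined tuple obtained from a common hop point of $L(u)$ and $L(v)$ (the situation of Theorem 2). A first, routine ingredient is a transitivity lemma for domination: if $t_1$ dominates $t_2$ and $t_2$ dominates $t_3$, then $t_1$ dominates $t_3$, which is immediate from transitivity of $\subseteq$ on weight intervals and of $\le$ on path lengths.

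The key structural observation, which drives the whole argument, is that the fixed processing order constrains which items can ever dominate a given one. Because labels are written into the lists of lower-ranked vertices only while a higher-ranked landmark is being expanded, every hop point appearing in $L(u)$ is ranked above $u$; consequently, for a stored item $(v,\ldots)\in L(u)$ any common hop point $x$ of $L(u)$ and $L(v)$ that could yield a dominating combination is ranked above $v$, and hence was processed strictly before $v$. I would record this as a lemma, since it is exactly what licenses the hypothesis ``after the hop vertex $x$ is processed'' in Theorem 2.

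With these in place the induction is straightforward. The base case is the empty (or self-labelled) index, which is trivially non-redundant. For the inductive step, consider the moment Algorithm 1 evaluates a candidate item $I=(v,w_s,w_e,k)$ for $L(u)$. If \textit{isRedundant} returns true, then by Theorem 1 or Theorem 2 the tuple of $I$ is genuinely dominated, so discarding $I$ loses no non-redundant information and the invariant is preserved; the only deletions performed are of items that $I$ strictly dominates, which removes redundancy rather than creating it. If \textit{isRedundant} returns false, then the same-hop loop has already deleted every stored hop-$v$ item of $L(u)$ that $I$ dominates and has certified that no stored hop-$v$ item dominates $I$, so after inserting $I$ the hop-$v$ items of $L(u)$ still form an antichain under domination; and the cross-hop loop has certified that no combination through a common hop dominates $I$. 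By the structural lemma all such combinations are already present at this moment, so the cross-hop check is complete.

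The part I expect to be the main obstacle is closing the induction against \emph{future} insertions: one must rule out that a label added later renders a previously accepted item $(v,\ldots)\in L(u)$ redundant. Here the processing-order lemma does the decisive work, since a new dominating combination for $(v,\ldots)\in L(u)$ would require a fresh label with some hop point $x$ ranked above $v$, but all such landmarks $x$ have already been expanded before $v$, so no hop-$x$ label is ever written again; likewise no further hop-$v$ item is produced once landmark $v$ is finished, so the same-hop antichains are frozen. A secondary subtlety worth handling explicitly is the corner case in the same-hop loop where $I$ deletes some $I_1$ and only afterwards meets an $I_1'$ that dominates $I$: by the transitivity lemma $I_1'$ would then dominate the just-deleted $I_1$, contradicting the inductive invariant, so this case cannot arise and the deletion is safe. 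Combining the two directions yields that no stored item is ever redundant, which is the assertion of the theorem.
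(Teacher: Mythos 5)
Your proof is correct, and it reaches the same conclusion by a genuinely different route. The paper argues statically about the \emph{final} index: it assumes some label item $I=(v,w_s,w_e,k)\in L(u)$ could be removed with every query still answerable, splits into two cases (the query is covered either by an item $(u,\ldots)\in L(v)$ or by a combination through a common hop $x$), and dismisses the first by the processing order and the second by appeal to the Theorem~2 pruning rule. You instead run an induction over the sequence of insertions, maintaining ``the stored items form an antichain under domination'' as an invariant. The two arguments rest on the same two pillars --- the descending-degree processing order forces every hop point of $L(u)$ to rank above $u$, and the \textit{isRedundant} checks implement Theorems~1 and~2 --- but your version makes explicit several points the paper leaves implicit or glosses over: why the cross-hop check at insertion time is \emph{complete} (any common hop of $L(u)$ and $L(v)$ ranks above $v$ and so has already been fully expanded, so no dominating combination can appear later), why deletions performed by the same-hop loop are safe (your transitivity lemma), and why a later-processed landmark can never retroactively make an accepted item redundant. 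The paper's case~(2) dismissal (``it does not hold according to Theorem~2'') silently relies on exactly this timing argument, so your dynamic formulation is arguably the more rigorous of the two; what it costs is length, since the paper's contradiction argument disposes of the claim in a few lines once the two cases are named.
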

\begin{proof}
    For any vertex \textit{u}, \textit{L}(\textit{u}) denotes the index labeling of \textit{u}. $I=(v,w_s,w_e,k)\in L(u)$ denotes the label itme. An index that does not contain redundant path information denotes that there must exist at least one query that cannot be answered by the index with the removal of any label item \textit{\textbf{I}}. Assuming that a WCKR query between two points can still be answered after removing $I=\ (v,w_s,w_e,k)\in L(u)$, then at least one of the following cases is satisfied: (1) There exists a label item $(u,w_s^1,w_e^1,k_1)\in L(v)$ which satisfies  $[w_s^1,w_e^1] \subseteq [w_s,w_e], k_1\le k$ . (2) There exist label items $(x,w_s^1,w_e^1,k_1)\\\in L(u)$, $(x,w_s^2,w_e^2,k_2)\in L(v)$ which satisfy  $\left[w_s^1,\ w_e^1\right]\subseteq\left[w_s,w_e\right]$, $[w_s^2,w_e^2] \subseteq [w_s,w_e]$, $k_1+k_2\le k$. For case (1), according to the pruning rule, it is known that the post-processed vertices can not be inserted into the path labels of the first-processed vertices as hop-points. For case (2) , it does not hold according to Theorem 2. Therefore the WKRI algorithm constructs an index that does not contain redundant path information.
\end{proof}
In Algorithm 1, assuming that the graph \textit{G} contains $|V|$ vertices and $|E|$ edges, the time complexity of completing the BFS traversal of all vertices is $O(|V|(|V|+|E|))$. Assuming that the upper bound of the number of label items associated with each vertex is \textit{l}, then during the traversal process, for each edge visited, the time complexity of determining whether the label item of a path containing the edge is redundant is $O(l)$, and hence the time complexity of Algorithm 1 is $O(l|V|(|V|+|E|))$. Assuming that each vertex stores at most \textit{L} labeled items, the space complexity of Algorithm 1 is $O(L|V|)$.

\subsection{Query Processing}
The query algorithm is processed based on the index labels, as shown in Algorithm 2. Given a query $Q(u,v,c,k)$, where $c=(w_s,w_e)$ is a weight constraint corresponding to three constraint forms. 1)It is corresponding to the constraint form of $\le w_e$ when $w_s$ is equal to $-\infty$. 2) It is corresponding to the constraint form of $\geq w_s$ when $w_e$ is equal to $+\infty$. 3)It is corresponding to the constraint form of $[w_s,w_e]$ otherwise. The specific process of querying can be described as follows: iteratively find whether there exists the same vertex in the index labels of \textit{u} and \textit{v}. If an identical point is found, the set of labels of the point will be examined to determine whether there exists a label item that satisfies the query condition. If it exists, the vertices \textit{u} and \textit{v} are reachable, otherwise the search for the next identical point continues.
\begin{algorithm}
\caption{Query processing algorithm Query()}
\begin{algorithmic}[1]
\renewcommand{\algorithmicrequire}{\textbf{Input:}}
\renewcommand{\algorithmicensure}{\textbf{Output:}}
\Require $Q(u,v,w_s,w_e,k)$ 
\Ensure true/false
\For{each $I_1=(u_1,w_s^1,w_e^1,k_1)\in L(u)$}
        \For{each $I_2=(u_2,w_s^2,w_e^2,k_2)\in L(v)$}
                \If{$u_1=u_2$ and $k_1+k_2\le k$}
                        \If{$[min(w_s^1,w_s^2),max(w_e^1,w_e^2)] \subseteq [w_s,w_e]$} 
                        \State \Return true
                        \EndIf
                \EndIf
        \EndFor{}
\EndFor{}
\end{algorithmic}
\end{algorithm}

\begin{example}
    Consider the indexes in Table \ref{tab3}. For the query (\textit{v}\textsubscript{2}, \textit{v}\textsubscript{6}, 5, 8, 3), we first find the same vertices \textit{v}\textsubscript{3} and \textit{v}\textsubscript{4} in the table by the index labels of \textit{v}\textsubscript{2} and \textit{v}\textsubscript{6}; then we proceed to detect whether there exists a label item in the table that satisfies the condition. Obviously, the label item of \textit{v}\textsubscript{2} (\textit{v}\textsubscript{3}, 4, 4, 1) and the label item of \textit{v}\textsubscript{6} (\textit{v}\textsubscript{3}, 8, 8, 1) can form the path $\textit{v}\textsubscript{2} \rightarrow \textit{v}\textsubscript{3} \rightarrow \textit{v}\textsubscript{6}$. The length of this path is 2, which satisfies the distance constraint (2 < 3). However, the range of the weights of this path is [4, 8], which does not satisfy the weight constraint [5, 8]. Therefore, we continue to checking the relevant label item of the next vertex \textit{v}\textsubscript{4}. Since there are (\textit{v}\textsubscript{4}, 5, 5, 1) and (\textit{v}\textsubscript{4}, 7, 7, 1) in the label items of \textit{v}\textsubscript{4}, and the paths formed by them satisfy the weight and distance constraints of the query, it can be known that \textit{v}\textsubscript{2} and \textit{v}\textsubscript{6} are reachable. Next, consider the query (\textit{v}\textsubscript{2}, \textit{v}\textsubscript{7}, 4, 5, 1), the same hop points are still \textit{v}\textsubscript{3} and \textit{v}\textsubscript{4}. However, none of the paths between \textit{v}\textsubscript{2} and \textit{v}\textsubscript{7} with \textit{v}\textsubscript{3} or \textit{v}\textsubscript{4} as intermediate points satisfy the weight and distance constraints of this query, and the query result is unreachable. Finally, consider the query (\textit{v}\textsubscript{6}, \textit{v}\textsubscript{7}, 6, 8, 2), based on the indexes in Table \ref{tab3}, it is known that the same hop points are \textit{v}\textsubscript{3}, \textit{v}\textsubscript{4}. Similar to the previous processing, Algorithm 2 first checks the related label items of \textit{v}\textsubscript{3}. Since none of the paths corresponding to the label items of \textit{v}\textsubscript{3} satisfy the weight and distance constraints of this query, it is necessary to continue to check the related label items of \textit{v}\textsubscript{4}. Since the paths consisting of label items (\textit{v}\textsubscript{4}, 7, 7, 1) and (\textit{v}\textsubscript{4}, 3, 3, 1) satisfy the distance constraints of the query, but the weight range is [3, 7], which does not satisfy the weight constraints [6, 8], the final query result is unreachable.
\end{example}
The query algorithm mainly detects whether two query points satisfy the constraints of the query by traversing the label sets of both query points. In the worst case, the label sets of both query points have to be traversed to give the query result. Since the first value of the indexed label item represents the hop point using the vertex processing order, and therefore the label items in the index are in ascending order. So the time complexity of Algorithm 2 is $O(L)$, where \textit{\textbf{L}} is the maximum value of the number of labeled items contained in all vertex index labels.

\section{ Optimization}
While the WKRI algorithm alleviates some of the limitations inherent in the basic algorithm, its requirement to traverse all vertices and construct an index during this traversal results in significant index construction time and an undesirably large index size. To address these challenges, we propose an index construction strategy inspired by the concept of minimum point coverage, as introduced by \citet{28brevsar2011minimum}. Although previous studies addressed a different problem domain, the underlying principle of minimizing coverage is leveraged here to effectively reduce the size of the index, thereby optimizing both the efficiency and scalability of the index construction process in our work.

\begin{definition}
(Point Coverage Set) For any edge $(u,v)\in E$, a vertex set $V^\ast$ is said to be a point coverage set of G if $\{u,v\}\cap V^\ast\neq\emptyset$. Among all the point covers of G, the one with the least number of vertices is called the minimal point cover.
\end{definition}
The advantages of constructing an index based on a minimum point coverage set are manifold. First, constructing an index based on a point coverage set is essentially a compression operation on the index. The reason is that the point coverage set is generally smaller than the vertex set \textit{V\textbf{ }}of the graph \textit{G}. Therefore, when constructing an index based on the point coverage set, the number of labeled items in the vertex labels can be reduced and thus reducing the index size. The smaller the size of the point coverage set, the smaller the size of the constructed index. Secondly, if a vertex does not belong to a point coverage set, all its neighbor vertices should belong to a point coverage set according to Definition 5. Therefore the shortest path information between any vertices in the graph will not be lost when constructing a index based on the vertices in the smallest point coverage set.

Sections 4.1 and 4.2 below discuss the two indexes GWKRI (Global WKRI) and LWKRI (Local WKRI) based on point coverage set respectively. Among them, GWKRI is a global index constructed by using the vertices in the minimum point coverage set as hop vertices, which has a smaller index size compared with WKRI indexe due to the compression of the number of vertices, and the query processing method is the same as that in Section 3.3. LWKRI also uses the vertices in the minimum point coverage set as hop vertices but only constructs indexes covering the vertices in the set, and thus has a smaller index size, but the querying process is relatively more complicated. It is a method of trading time for space.

\subsection{GWKRI Index Construction}
The GWKRI index construction process can be divided into two steps: (1) solving for minimum point coverage  (2) constructing the index based on the vertices in the minimum point coverage.

Since solving the minimum point coverage is an NP-hard problem\cite{28brevsar2011minimum}, in order to solve it quickly, we take a step back and solve the approximate minimum point coverage. We apply the following method to obtain an approximate solution in linear time.

Given a point coverage set \textit{M} with an initial value equal to the empty set and a sequence of vertices arranged in descending order of degree, the basic process of solving the approximate minimum point cover is to select the vertex \textit{u} with the largest degree and add it to \textit{M}, then delete \textit{u} and its associated edges and update the degrees of vertices affected by \textit{u} being deleted. This process is repeated until there are no edges in the graph, and the approximate minimum point coverage \textit{M} can be obtained.

Repeatedly selecting the vertex with the largest degree needs to maintain the order of vertex degrees, and updating the degree of the affected vertex will change the original ordering of vertex degrees. If you simply re-sort the vertices using the sorting algorithm, it will take more time. In order to solve this problem, this paper proposes an approximate minimum point coverage solution method with linear time complexity. The core idea of this method is to record the position of the last vertex among the vertices corresponding to each degree value i in the sorted array, so that when updating the degree of any vertex, the adjustment of the vertex position can be completed in $O(1)$ time. This ensures the correctness of sorting the vertices in the array in descending order of degree.

The process of solving the approximate minimum vertex cover is shown in Algorithm 3, in which 4 arrays are used, namely: (1) \textit{\textbf{node}} array, the size of which is also equal to $|V|$, which stores vertices sorted in descending order of degree, (2) ) \textit{\textbf{reversal}} array, the size of the array is equal to $|V|
$, the subscript corresponds to the vertex number, the element value represents the subscript of the vertex in the \textit{\textbf{node}} array, this array is used to save the mapping relationship between the vertex number and the subscript position, (3) \textit{\textbf{degree}} array, the size of this array is equal to $|V|$, which stores the degrees of all vertices, (4) \textit{\textbf{end}} array, the size of this array is equal to the maximum value of the degrees of all vertices in the graph plus 1. The element value with subscript \textit{i} is equal to the subscript of the last vertex with degree \textit{i} in the \textit{\textbf{node}} array. Lines 1-12 of Algorithm 3 initialize relevant variables.Lines 12-28 loop to solve the coverage set \textit{M}. The condition of the loop is to determine whether there is an edge in the graph. If it exists, lines 14-15 will find the point with the largest degree and add it to the coverage set \textit{M}. Lines 16-25 loop to process all the neighbors of \textit{u}. The basic idea is to reduce the degree of the neighbor by one, which is equivalent to the operation of deleting \textit{u} and its associated edges, and then place the neighbor with the degree reduced by one in the appropriate place in the \textit{\textbf{node}} array, so that the vertices in the \textit{\textbf{node}} array are still ordered in descending order of vertex degree.

\begin{algorithm}
\caption{Minimum Vertex Coverage Set Construction}
\begin{algorithmic}[1]
\renewcommand{\algorithmicrequire}{\textbf{Input:}}
\renewcommand{\algorithmicensure}{\textbf{Output:}}
\Require $\textit{G} 
=\left(V,E,\mathrm{\Sigma},w\right)$  
\Ensure {The minimum vertex coverage set \textit{M}}
\State $\textit{M}  \leftarrow  \emptyset$
\For{each $u\in V$}
        \State $end[degree[\textit{u}]] = end[degree[\textit{u}]] + 1$ 
\EndFor{}
\State $end[d_{max}] = end[d_{max}] - 1$
\For{$d\leftarrow d_{max} - 1\ to\  0$}
        \State $end[\textit{d}] = end[\textit{d}] + end[\textit{d}+1]$
\EndFor{}
\For{$i \leftarrow 0\ \ to\ \  |V|$}
        \State $reversal[node[\textit{i}]] = \textit{i} $
\EndFor{}
\State $e \leftarrow |E|, i \leftarrow 0$
\While{$e > 0$}
        \State $\textit{u} = node[\textit{i}]$ 
        \State $M = M \cup u$
        \For{each $v \in N(u)$}
                \State $\textit{d} \leftarrow degree[\textit{v}]$
                \State $\textit{temp} \leftarrow node[end[\textit{d}] \ -\  1]$ 
                \State $node[end[\textit{d}] \ -\  1] \leftarrow \textit{v} $
                \State $node[reversal[\textit{v}]] \leftarrow \textit{temp} $
                \State $reversal[\textit{temp}] \leftarrow reversal[\textit{v}] $
                \State $reversal[\textit{v}] \leftarrow end[\textit{d}] \ -\  1 $
                \State $end[\textit{d}] \leftarrow end[\textit{d}] \ -\  1 $
                \State $degree[\textit{v}]\leftarrow\textit{d }\ -\  1 $
        \EndFor{}
        \State $\textit{e} = \textit{e} \ -\  degree[\textit{u}] $
        \State $\textit{i} = \textit{i} + 1$
\EndWhile{}
\State\Return \textit{M}
\end{algorithmic}
\end{algorithm}
Figure 3 shows the specific process of how index array \textit{\textbf{end}} maintaining the sorted array. Observing Figure 3(a), we can see that vertex $v_3$ is the point with the largest degree, so vertex $v_3$ is first added to the coverage set, and the degrees of its neighbor vertices $v_1,v_2,v_5,v_6$ are updated at the same time. For vertex $v_1$, its degree before updating is 3, so exchange $v_1$ with vertex $v_4$ whose subscript is equal to $\textit{end}[3]$ in the \textit{\textbf{node}} array. After that, the degree of $v_1$ is reduced by one, and the value of the element with subscript  3 in \textbf{\textit{end}} is reduced by one, which means that there is one less vertex with degree  3. The next vertex to be processed is\textit{ b}. Since its degree is  3, and the element with subscript 3 in the \textit{\textbf{end}} array is  2, corresponding to the vertex with subscript  2 in the \textbf{\textit{node}} array, it can be found that it is $v_2$ itself, so there is no need to exchange. Then reduce the degree of $v_2$ by one, and further reduce the value of the element with index 3 in the \textbf{\textit{end}} array by one. Vertices $v_5$ and $v_6$ are then processed. The specific process is similar. The state after processing these four vertices is shown in Figure 3(b).
\begin{figure}[htbp]
    \centering
    \subfloat[Before $v_3$ is added to the coverage set]
    {
    \includegraphics[width=0.5\linewidth]{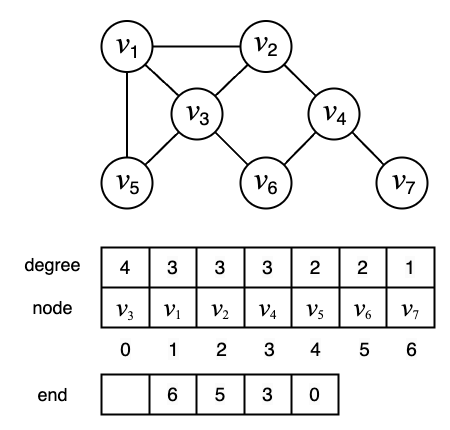}
    }
    \subfloat[]
    {
    \includegraphics[width=0.5\linewidth]{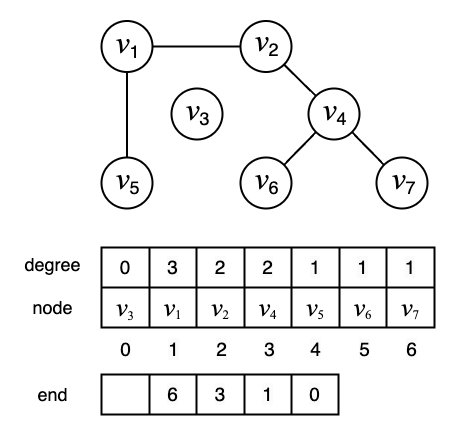}
    }
    \caption{Linear update method for vertex degree ordering of minimum coverage set}
\end{figure}

Algorithm 3 uses four one-dimensional arrays to cache intermediate results during operation, including \textit{\textbf{end}}, \textbf{\textit{node}}, \textbf{\textit{degree}} and \textit{\textbf{reverse}} arrays. The upper bound on the size of these four arrays is the number of vertices of the input data graph, so the space complexity of Algorithm 3 is $O(|V|)$. The cost of the algorithm in the initialization phase is related to the number of vertices in the graph, so the cost of the initialization phase is $O(|V|)$. In the coverage set solving phase, each edge needs to be visited once, and the cost of processing a single edge at the same time is $O(1)$, so the cost of this phase is $O(|E|)$. In summary, the time complexity of Algorithm 3 is $O(|V|+|E|)$.

For the vertices in the minimum point cover, the vertex processing order is obtained by arranging them in descending order according to their degree. Select vertices as hop points according to the processing order for BFS traversal, and the index labels of all vertices can be established. The construction process is consistent with the WKRI algorithm. Example 4 details the process of constructing the GWKRI index using minimum point coverage.
\begin{example}
    Considering the undirected weighted graph $G$ in Figure 1, the approximate minimum point coverage $M$ of the vertices in the graph is first obtained through Algorithm 3, and the vertex processing order \textit{v}\textsubscript{3}, \textit{v}\textsubscript{4}, \textit{v}\textsubscript{1} in $M$ is obtained according to the degree value in descending order. First, \textit{v}\textsubscript{3} is processed. The specific method is to perform the traversal operation starting from \textit{v}\textsubscript{3}. For each vertex encountered during the traversal process, insert the label item $I\ =\ (\textit{v}\textsubscript{3},w_s,w_e,k)$ into the label index of the vertex. It should be noted that the pruning rules introduced in the previous chapter are also used during the traversal process to avoid redundant operations. After processing \textit{v}\textsubscript{3}, continue to process other vertices until all vertices in the coverage set have been processed. The constructed index is shown in Table \ref{tab4}. Obviously, the number of vertices in the coverage set has a greater impact on the index size. Compared with the WKRI algorithm that needs to process all vertices to build the index, the GWKRI index construction process only needs to process the vertices in the coverage set, and the index construction time and scale are significantly reduced.
\end{example}
\begin{table}[htbp]
  \centering
  \caption{GWKRI index}
    \begin{tabular}{cccc}
    \toprule
    Label & \textit{v}\textsubscript{3} & \textit{v}\textsubscript{4} & \multicolumn{1}{c}{\textit{v}\textsubscript{1}} \\
    \midrule
    \multirow{3}[2]{*}{\textit{L}(\textit{v}\textsubscript{1})} & (\textit{v}\textsubscript{3},3,3,1) & \multirow{3}[2]{*}{(\textit{v}\textsubscript{4},5,5,2)} & \multicolumn{1}{c}{\multirow{3}[2]{*}{(\textit{v}\textsubscript{1},0,0,0)}} \\
    \multicolumn{1}{c}{} & (\textit{v}\textsubscript{3},4,5,2) & \multicolumn{1}{c}{} &  \\
    \multicolumn{1}{c}{} & (\textit{v}\textsubscript{3},5,8,4) & \multicolumn{1}{c}{} &  \\
    \midrule
    \multirow{2}[2]{*}{\textit{L}(\textit{v}\textsubscript{2})} & (\textit{v}\textsubscript{3},4,4,1) & \multirow{2}[2]{*}{(\textit{v}\textsubscript{4},5,5,1)} & \multicolumn{1}{c}{\multirow{2}[2]{*}{(\textit{v}\textsubscript{1},5,5,1)}} \\
    \multicolumn{1}{c}{} & (\textit{v}\textsubscript{3},5,8,3) & \multicolumn{1}{c}{} &  \\
    \midrule
    \textit{L}(\textit{v}\textsubscript{3})  & (\textit{v}\textsubscript{3},0,0,0) & \multicolumn{1}{c}{} &  \\
    \midrule
    \multirow{2}[2]{*}{\textit{L}(\textit{v}\textsubscript{4})} & (\textit{v}\textsubscript{3},4,5,2) & \multirow{2}[2]{*}{(\textit{v}\textsubscript{4},0,0,0)} & \multirow{2}[2]{*}{} \\
    \multicolumn{1}{c}{} & (\textit{v}\textsubscript{3},7,8,2) & \multicolumn{1}{c}{} &  \\
    \midrule
    \multirow{2}[2]{*}{\textit{L}(\textit{v}\textsubscript{5})} & (\textit{v}\textsubscript{3},6,6,1) & \multirow{2}[2]{*}{(\textit{v}\textsubscript{4},2,5,3)} & \multicolumn{1}{c}{\multirow{2}[2]{*}{(\textit{v}\textsubscript{1},2,2,1)}} \\
    \multicolumn{1}{c}{} & (\textit{v}\textsubscript{3},2,3,2) & \multicolumn{1}{c}{} &  \\
    \midrule
    \multirow{2}[2]{*}{\textit{L}(\textit{v}\textsubscript{6})} & (\textit{v}\textsubscript{3},8,8,1) & \multirow{2}[2]{*}{(\textit{v}\textsubscript{4},7,7,1)} & \multirow{2}[2]{*}{} \\
    \multicolumn{1}{c}{} & (\textit{v}\textsubscript{3},4,7,3) & \multicolumn{1}{c}{} &  \\
    \midrule
    \textit{L}(\textit{v}\textsubscript{7})  & (\textit{v}\textsubscript{3},3,5,3) & (\textit{v}\textsubscript{4},3,3,1) &  \\
    \bottomrule
    \end{tabular}%
  \label{tab4}%
\end{table}%

\subsection{LWKRI Index Construction }
In order to further reduce the index size, we propose the LWKRI index. Compared with GWKRI, the LWKRI index only builds index labels for the vertices in the coverage set. Therefore, based on GWKRI, it is necessary to determine whether the currently visited vertex belongs to the vertex in the coverage set. If it does, determine whether the path is redundant and decide whether to insert the index,  otherwise directly record the relevant information of the path and enter the next round of BFS traversal, as shown in Algorithm 4.

\begin{algorithm}
\caption{LWKRI Index Construction Algorithm}
\begin{algorithmic}[1]
\renewcommand{\algorithmicrequire}{\textbf{Input:}}
\renewcommand{\algorithmicensure}{\textbf{Output:}}
\Require $\textit{G} 
=\left(V,E,\mathrm{\Sigma},w\right)$ , \textit{\textbf{M}}
\Ensure Index label \textit{L}(\textit{u}) for each vertex \textit{u}
\For{each $\textit{v}\in \textit{M}$(from higher order to lower)}
        \State Perform BFS starting from \textit{v}, and for each  vertex \textit{u} being visited: 
        \State $I\gets(v,w_s,w_e,k)$
        \If{$v\ \in\ M$}
                \If{isRedundant\textbf{(}\textit{I}\textbf{) = }false}
                        \State add \textit{I }to \textit{L}(\textit{u}) and expand \textit{u} in the BFS 
                \Else{ Do not add \textit{I} to \textit{L}(\textit{u}) nor expand \textit{u} }
                \EndIf
        \Else{}
                \textbf{continue}
        \EndIf
\EndFor{}
\end{algorithmic}
\end{algorithm}

\begin{example}
    Considering the undirected weighted graph \( G \) in Figure 1, the vertex processing order \textit{v}\textsubscript{3}, \textit{v}\textsubscript{4}, \textit{v}\textsubscript{1} in \( M \) is obtained by arranging it in descending order according to the degree value. When processing \textit{v}\textsubscript{3}, four vertices \textit{v}\textsubscript{1}, \textit{v}\textsubscript{2}, \textit{v}\textsubscript{5}, \textit{v}\textsubscript{6} need to be accessed. When \textit{v}\textsubscript{1} is encountered in the process of processing \textit{v}\textsubscript{3}, it will be judged whether the resulting path is redundant. If it is redundant, it will be discarded. Otherwise, the label item about \textit{v}\textsubscript{3} will be inserted into the index label of \textit{v}\textsubscript{1}. Since \textit{v}\textsubscript{2}, \textit{v}\textsubscript{5}, and \textit{v}\textsubscript{6} are not in the coverage set, no additional judgment is required when accessing \textit{v}\textsubscript{2}, \textit{v}\textsubscript{5}, and \textit{v}\textsubscript{6}. You can directly record the path information related to the path, and then continue traversing in the same way. Other vertices are processed similarly, and the final LWKRI index is shown in Table \ref{tab5}.
\end{example}

\begin{table}[htbp]
  \centering
  \caption{LWKRI index }
    \begin{tabular}{cccc}
    \toprule
    \multicolumn{1}{c}{Label} & \textit{v}\textsubscript{3} & \multicolumn{1}{c}{\textit{v}\textsubscript{4}} & \multicolumn{1}{c}{\textit{v}\textsubscript{1}} \\
    \midrule
    \multicolumn{1}{c}{\multirow{3}[2]{*}{\textit{L}(\textit{v}\textsubscript{1})}} & (\textit{v}\textsubscript{3},3,3,1) & \multicolumn{1}{c}{\multirow{3}[2]{*}{(\textit{v}\textsubscript{4},5,5,2)}} & \multicolumn{1}{c}{\multirow{3}[2]{*}{(\textit{v}\textsubscript{1},0,0,0)}} \\
          & (\textit{v}\textsubscript{3},4,5,2) &       &  \\
          & (\textit{v}\textsubscript{3},5,8,4) &       &  \\
    \midrule
    \multicolumn{1}{c}{\textit{L}(\textit{v}\textsubscript{3})} & (\textit{v}\textsubscript{3},0,0,0) &       &  \\
    \midrule
    \multicolumn{1}{c}{\multirow{2}[2]{*}{\textit{L}(\textit{v}\textsubscript{4})}} & (\textit{v}\textsubscript{3},4,5,2) & \multicolumn{1}{c}{\multirow{2}[2]{*}{(\textit{v}\textsubscript{4},0,0,0)}} & \multirow{2}[2]{*}{} \\
          & (\textit{v}\textsubscript{3},7,8,2) &       &  \\
    \bottomrule
    \end{tabular}%
  \label{tab5}%
\end{table}%

Table \ref{tab5} gives the LWKRI index. Compared with the GWKRI index of Table \ref{tab4}, the size of the LWKRI index is significantly reduced. Assume that any hop vertex generates at most \textit{l} label items for each other vertex. Since the LWKRI index only indexes vertices in the coverage set, the upper bound on the number of label items for each vertex is $|M|\times l$, so space complexity is $O(l{|M|}^2)$. The time complexity of Algorithm 4 for building the LWKRI index is consistent with the complexity of building the GWKRI index. The difference is that the LWKRI index only stores label items related to the vertices in the minimum point coverage set \textit{M}.

\subsection{Query Processing}
The query processing method based on the GWKRI index is the same as the query processing method based on the WKRI index. The query processing algorithm in this section mainly introduces the LWKRI index, which is divided into the following three situations: (1) Both vertices in the query belong to the minimum point coverage; (2) Only one of the two vertices belongs to the minimum point cover; (3) Neither of the two vertices belongs to the minimum point cover.

In order to handle these three situations uniformly, we take advantage of the following properties of the minimum point cover set \textit{M}: if the vertex $u\notin M$, then all neighbors of \textit{u} belong to \textit{M}. This property can be simply proved by contradiction: if there is a neighbor $v\notin M$ of u, then ${u,v}\cap M=\emptyset$, which violates the definition of a point coverage set.

Therefore, for the first case, it can be answered directly through the index, as shown in line 1-2 of Algorithm 5. The Query() function in the algorithm corresponds to the query algorithm in Algorithm 2. For the second case, as shown in lines 3-16 of Algorithm 5, the query can be transformed into a \textit{k}-1 step reachable query for neighbor vertices of non-minimum point covering set vertices. The third case corresponds to lines 17-25 of Algorithm 5. First, find the neighbor vertices of vertices \textit{u} and \textit{v} in the coverage set, then change \textit{k} in the query condition to \textit{k}-2, we can process the reachability between the neighbors of the two query points to answer the reachability between \textit{u} and \textit{v}. The specific process is shown in Algorithm 5.

\begin{algorithm}
\caption{Query processing algorithm of LWKRI}
\begin{algorithmic}[1]
\renewcommand{\algorithmicrequire}{\textbf{Input:}}
\renewcommand{\algorithmicensure}{\textbf{Output:}}
\Require $Q(u,v,w_s,w_e,k)$ 
\Ensure true/false

\If{$u\in M\land v\in M$}
\State \Return $Query(Q)$
\ElsIf{$u\in M\land v\notin M$}
        \For{each $x\in N(v)$}
                \State$w_x \leftarrow w(e_{x\rightarrow v})$
                \If{$Query\left(u,x,w_s,w_e,k-1\right)\ \  \land  \ \ w_x\in [w_s,w_e]$}
                \State \Return true
                \EndIf
        \EndFor{}
\ElsIf{$u\notin M\land v\in M$}
        \For{each $y\in N(u)$}
                \State$w_y \leftarrow w(e_{y\rightarrow u})$
                \If{$Query\left(y,v,w_s,w_e,k-1\right)\ \  \land  \ \ w_y\in [w_s,w_e]$}
                \State \Return true
                \EndIf
        \EndFor{}
\ElsIf{$u\notin M\land v\notin M$}
        \For{each $y\in N(u)$}
                \For{each $x\in N(v)$}
                        \State $w_x \leftarrow w(e_{x\rightarrow v})$; $w_y \leftarrow w(e_{y\rightarrow u})$
                        \If{$Query\left(y,x,w_s,w_e,k-2\right)\ \  \land  \ \  w_x,w_y\in [w_s,w_e]$}
                        \State \Return true
                        \EndIf
                \EndFor{}
        \EndFor{}
\EndIf
\State\Return false
\end{algorithmic}
\end{algorithm}

The worst-case time complexity of Algorithm 5 is mainly reflected in the third case, when answering a given query requires processing $d(u)\times(d(v)$ queries based on the neighbors of the query point.Since the index label of a vertex contains at most $|M|$ hop points, the cost of finding common hop points is $O(|M|)$. In addition, the cost of judging \textit{k} constraints is $O(1)$, and the cost of judging weight constraints is $O(\left|\mathrm{\Sigma}\right|)$, so the worst-case time complexity of Algorithm 5 is $O (d(u)\times d(v)\times |M| \times|\mathrm{\Sigma}|)$.

Although the algorithm proposed in this article studies undirected weighted graphs, it can be extended to directed weighted graphs. In terms of algorithm design, the difference between undirected graphs and directed graphs is as follows: in terms of constructing a 2-hop index, there are two labels for each vertex \textit{v}, one label contains the vertices that can be reached by \textit{v}, and the other label contains the vertices that can reach \textit{v}. An undirected graph contains only one index label. Therefore, when building an index, a directed graph only has one more round of reverse traversal than an undirected graph.

\section{Experiment}
\subsection{Experimental Setup}
The experiment in this article is run on a large server with 20 cores, Intel Xeon Gold 5218R CPU clocked at 2.1GHz, and 1024G DDR4 memory. The programming language is C++, the compilation environment is g++ 9.4.0, and the operating system is Ubuntu 20.04.

There are three indexes used for comparison in the experiment, including (1) WKRI index, (2) GWKRI index, and (3) LWKRI index. The main evaluation indicators of the experiment include index size, index construction time and query response time.

\subsection{Dataset}
The 15 real data sets used in the experiments of this article include Email-EuAll, web-Google, soc-LiveJournall, WikiTalk, com-Orkut, cit-Patents\footnote{http://snap.stanford.edu/}, uniprot22m, uniprot100m, uniprot150m, go-uniprot, 10go-uniprot\footnote{http://www.uniprot.org/}, dbpedia\footnote{http://dbpedia.org/}, twitter\footnote{http://twitter.mpi-sws.org/}, Amazon and Web trackers\footnote{http://konect.cc/}. Among them, com-Orkut and Web trackers contain more than 100 million edges. Table \ref{tab6} shows the relevant information of these data sets, where the size of the graph is $|V|+|E|$, and the number of vertices \textit{$|M|$} in the covering set is the number of vertices in the approximate minimum covering set calculated using the method introduced in Section 4.1.In addition, based on the above original dataset, this paper also changes the range of the weight set from $|\mathrm{\Sigma}|=10$ to $|\mathrm{\Sigma}|=100$, $|\mathrm{\Sigma}|=200$, and $|\mathrm{\Sigma}|=500$ while keeping the number of vertices and edges unchanged, and it is used to test the effect of the change of the range of weights on the performance of this paper's algorithm.

\begin{table}[htbp]
  \centering
  \caption{Dataset statistics  ($|\mathrm{\Sigma}|=10$) }
    \begin{tabular}{cccccc}
    \toprule
    Dataset & \multicolumn{1}{c}{\textit{$|V|$}} & \multicolumn{1}{c}{\textit{$|E|$}} & \multicolumn{1}{c}{graph size} & \multicolumn{1}{c}{average degree} & \multicolumn{1}{c}{\textit{$|M|$}} \\
    \midrule
    Email-EuAll & 231,000 & 223,004 & 454,004 & 1.93  & 16,875 \\
    \midrule
    web-Google & 371,753 & 517,805 & 889,558 & 2.79  & 106,481 \\
    \midrule
    soc-LiveJournall & 971,232 & 1,024,140 & 1,995,372 & 2.12  & 85,476 \\
    \midrule
    uniprot22m & 1,595,444 & 1,595,442 & 3,190,886 & 2.01& 141,517 \\
    \midrule
    WikiTalk & 2,281,879 & 2,311,570 & 4,593,449 & 2.03  & 167,440 \\
    \midrule
    dbpedia & 3,365,623 & 7,989,191 & 11,354,814 & 4.75  & 403,874 \\
    \midrule
    uniprot100m & 16,087,295 & 16,087,293 & 32,174,588 & 2.01& 1,366,073 \\
    \midrule
    twitter & 18,121,168 & 18,359,487 & 36,480,655 & 2.03  & 423,057 \\
    \midrule
    uniprot150m & 25,037,600 & 25,037,598 & 50,075,198 & 2.01  & 2,148,371 \\
    \midrule
    10go-uniprot & 469,526 & 3,476,397 & 3,945,923 & 15.03 & 161,038 \\
    \midrule
    cit-Patents & 3,774,768 & 16,518,947 & 20,293,715 & 8.75  & 1,726,788 \\
    \midrule
    go-uniprot & 6,967,956 & 34,769,339 & 41,737,295 & 10.12 & 20,155 \\
    \midrule
    Amazon & 31,050,733 & 82,677,131 & 113,727,864 & 5.33& 4,316,214 \\
    \midrule
    Web trackers & 40,421,974 & 140,613,762 & 181,035,736 & 6.91& 5,101,631 \\
    \midrule
    com-Orkut & 3,072,441 & 117,185,083 & 120,257,524 & 18.42 & 463,214 \\
    \bottomrule
    \end{tabular}%
  \label{tab6}%
\end{table}%

\subsection{Performance Analysis and Comparison}
Since this paper is the first to propose a simultaneous solution to the weight-based and distance-constrained reachability (WCKR) query problem, the comparative experiments constructed focus on online search-based methods such as BFS traversal.
At the same time, a large number of experimental comparisons between the baseline method (WKRI) and the optimization methods (GWKRI, LWKRI) proposed in this paper are also carried out to analyze the index construction time, index size, and the corresponding query time of the different methods. The experimental results verify the high efficiency of the baseline method and the optimization method in this paper.
\begin{figure}[h]
    \centering
    \includegraphics[width=0.75\linewidth]{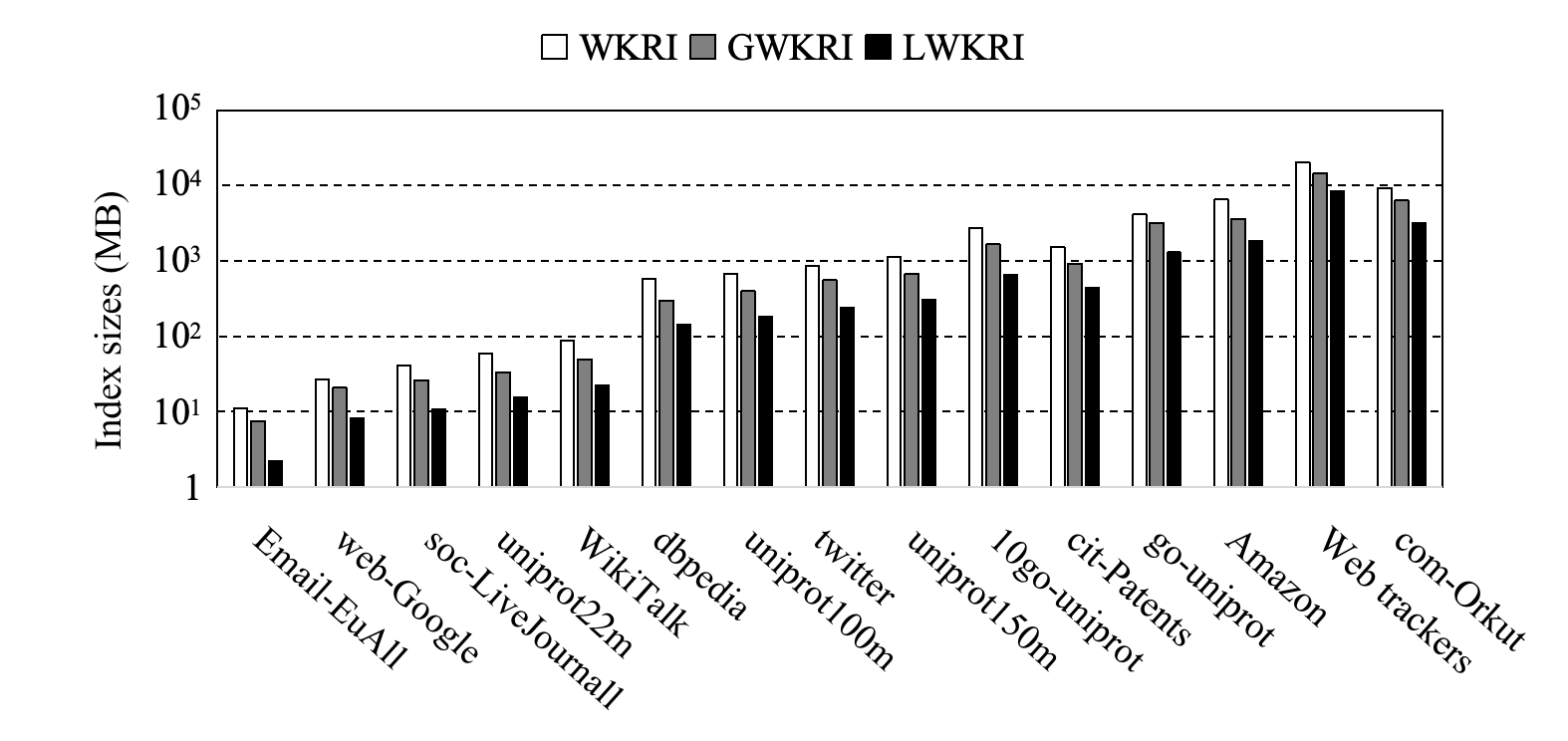}
    \caption{Index size}
    \label{fig:4}
\end{figure}
\subsubsection{Index Size}
Figure \ref{fig:4} demonstrates the comparison of the three indexes in terms of index size. As depicted in the figure, the sizes of the three indexes generally decrease across the datasets, with the LWKRI index consistently achieving the smallest size. This trend suggests that the indexing strategy employed by LWKRI is more efficient in minimizing storage requirements compared to WKRI and GWKRI.

The data shows that in some datasets, the LWKRI index is approximately 5 times smaller than the WKRI index and about 2 times smaller than the GWKRI index. This significant reduction in size indicates the effectiveness of the LWKRI strategy, which is based on minimum vertex coverage, in optimizing index storage.

The figure illustrates that the extent of the size reduction varies across different datasets. While the LWKRI index consistently outperforms the other two indexes, the degree of improvement differs, suggesting that the benefits of the minimum vertex coverage strategy may be more pronounced in certain types of datasets.

\begin{figure}[h]
    \centering
    \includegraphics[width=0.75\linewidth]{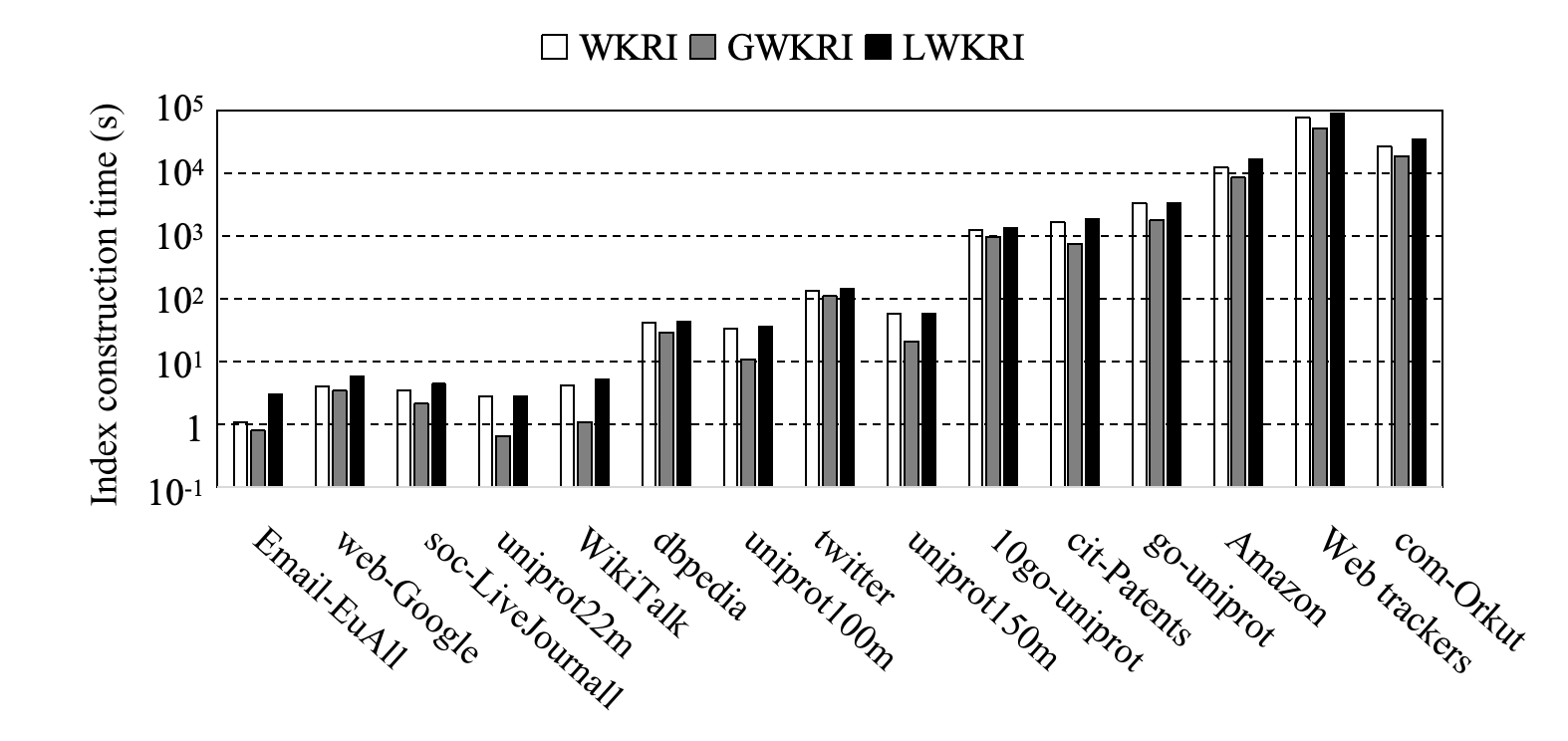}
    \caption{Index Construction time}
    \label{fig:5}
\end{figure}

\subsubsection{Index Construction Time}
In this section, we compare the construction times of three indexing schemes: GWKRI, LWKRI, and WKRI. The experimental results, illustrated in Figure \ref{fig:5}, reveal notable differences in performance. Specifically, the GWKRI index is approximately four times faster than the WKRI index on sparse graphs. This efficiency is attributed to the fact that the GWKRI index only requires traversal operations from vertices in the minimum vertex cover set, whereas the WKRI index necessitates traversal from every vertex in the graph.

As the graph density increases, the size of the minimum vertex cover set grows, which means that by the time vertices in this set are processed, many labeled items have already been identified. Consequently, the need for additional traversal from non-cover vertices diminishes, leading to a convergence in construction times between the WKRI and GWKRI indices.

The LWKRI index shows construction times similar to those of the WKRI index. Despite LWKRI storing index label entries solely for vertices in the minimum vertex cover, it suffers from increased traversal overhead. This is because LWKRI cannot effectively prune vertices not in the cover set during traversal, resulting in additional processing time. Thus, while LWKRI processes fewer vertices, the traversal inefficiencies contribute to a time overhead comparable to that of WKRI.

For the com-Orkut dataset, both GWKRI and WKRI indices exhibit longer construction times due to the high edge count and increased graph density. This observation highlights the impact of graph density on index construction efficiency and underscores the necessity for optimized indexing strategies in dense graph scenarios.

\begin{figure}[h]
    \centering
    \includegraphics[width=0.75\linewidth]{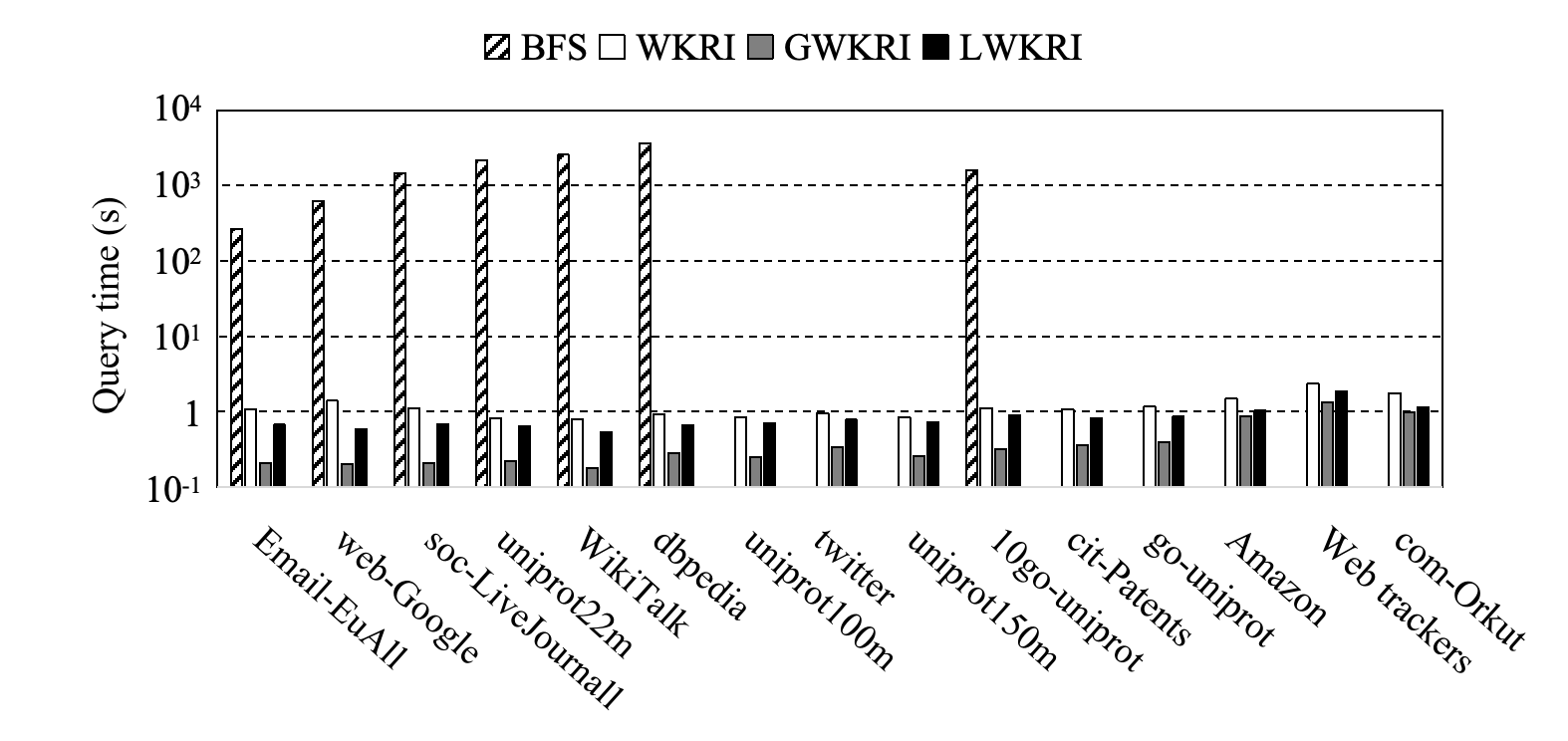}
    \caption{Query time}
    \label{fig:6}
\end{figure}

\subsubsection{Query Response Time}
For each dataset, a collection containing 1 million queries was randomly generated, comprising 500,000 reachable and 500,000 unreachable queries. Figure \ref{fig:6} presents a comparison of the query response times of the four algorithms.

As shown in the figure, the BFS algorithm is significantly less efficient than the other three algorithms in terms of query time. For large graphs, the BFS algorithm's query time can exceed 5000 seconds, which is indicated by "/" in Figure \ref{fig:6} and is therefore not discussed in detail.

Although the query processing methods for the WKRI and GWKRI indexes are similar, the query performance of the GWKRI index is approximately 6 times faster than that of the WKRI index. This difference in performance is primarily attributed to the larger index size of WKRI, which increases the traversal cost of index label items during query processing.

For the LWKRI index, its query efficiency is slightly lower than that of the GWKRI index. This is mainly because, with the GWKRI index, only one query needs to be processed. In contrast, when using the LWKRI index, if the query point is not part of the minimum vertex coverage, multiple queries must be constructed and processed based on the neighboring vertices of the query point. This additional step introduces a relatively large time overhead.

Overall, while the LWKRI index offers significant storage efficiency, this comes with a trade-off in query performance, particularly when compared to the GWKRI index.

\subsubsection{Comparison of Algorithm Efficiency under Different Weight Scales}
In this paper, we conduct comparative experiments on index construction efficiency by varying the range of $|\mathrm{\Sigma}|$. Specifically, there are four cases of $|\mathrm{\Sigma}|=10$, $|\mathrm{\Sigma}|=100$,$|\mathrm{\Sigma}|=200$, and $|\mathrm{\Sigma}|=500$, in which the dataset of $|\mathrm{\Sigma}|=10$ is the original dataset. The latter three cases are to reassign the weights on each edge randomly from $0-X (X=100, 200, 500)$ while keeping the vertices and edges unchanged. The experimental results are shown in Figs. 4-6. We use four representative datasets (soc-LiveJournall, uniprot150m, cit-Patents, and 10go-uniprot) to compare the effect of the range of $|\mathrm{\Sigma}|$ on the index construction efficiency. Soc-LiveJournall and uniprot150m are sparser and the index building efficiency is less affected by the range of $|\mathrm{\Sigma}|$, cit-Patents and 10go-uniprot are denser and the index building efficiency is more affected by the range of $|\mathrm{\Sigma}|$.

\begin{figure}[h]
  \centering
  {\includegraphics[width=0.5\textwidth]{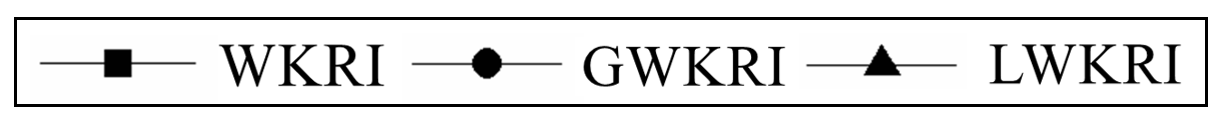}}
  \centering
  \subfloat[soc-LiveJournall]
  {\includegraphics[width=0.25\textwidth]{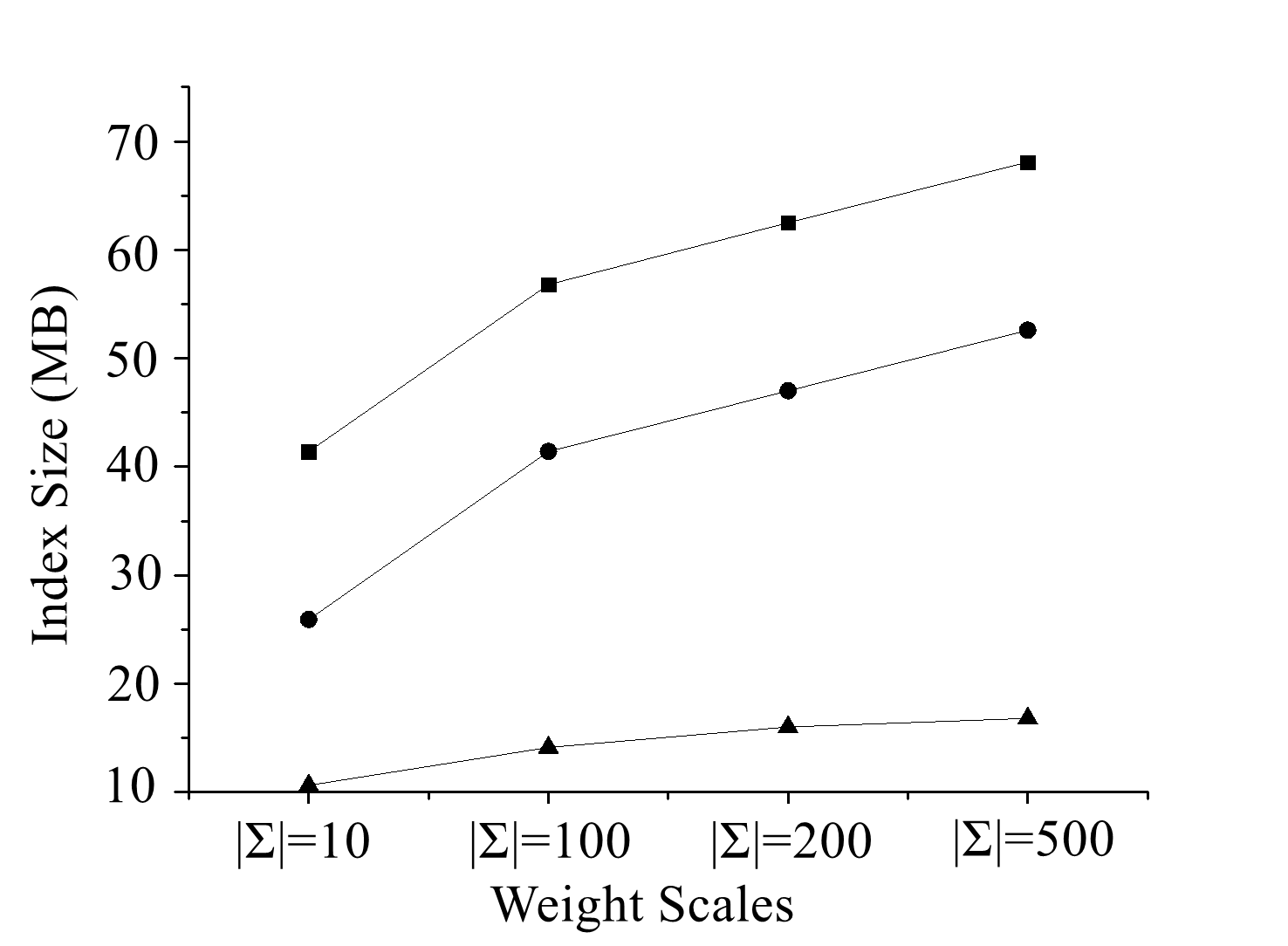}}    
  \subfloat[uniprot150m]
  {\includegraphics[width=0.25\textwidth]{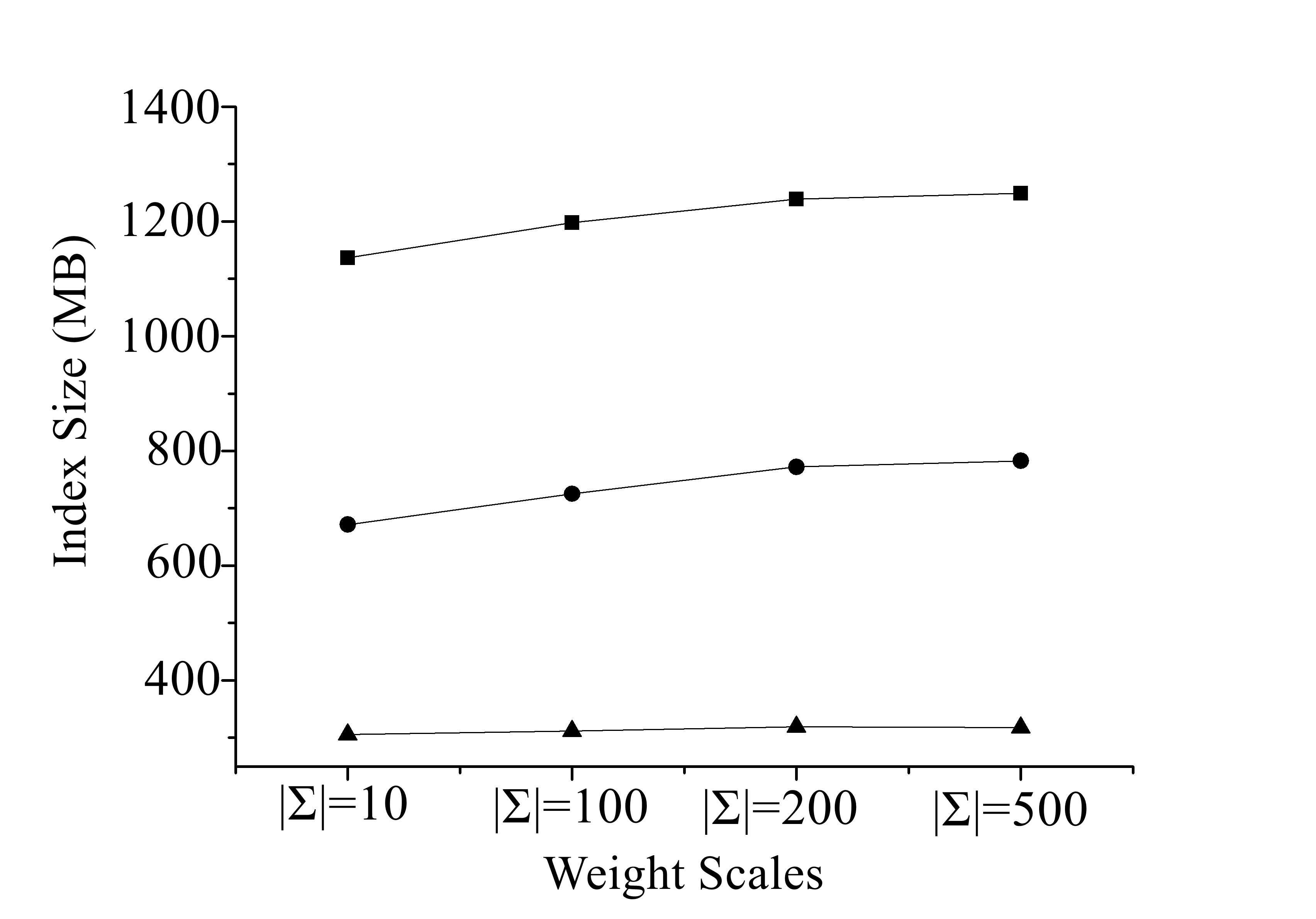}}
  \subfloat[cit-Patents]
  {\includegraphics[width=0.25\textwidth]{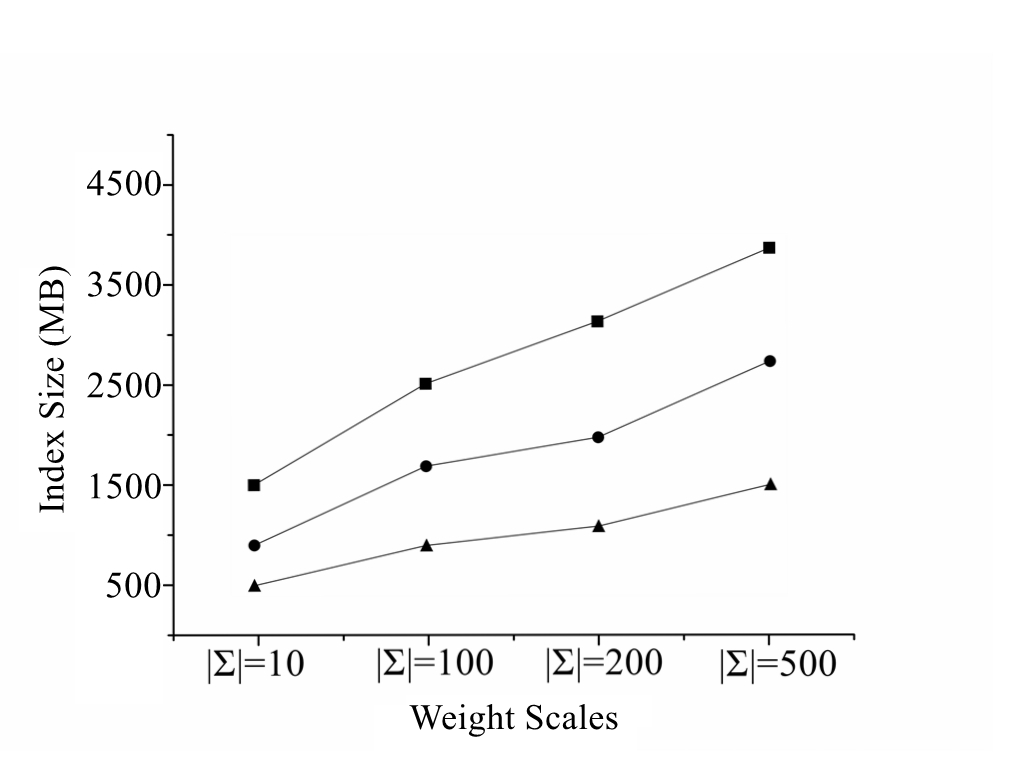}}
  \subfloat[10go-uniprot]
  {\includegraphics[width=0.25\textwidth]{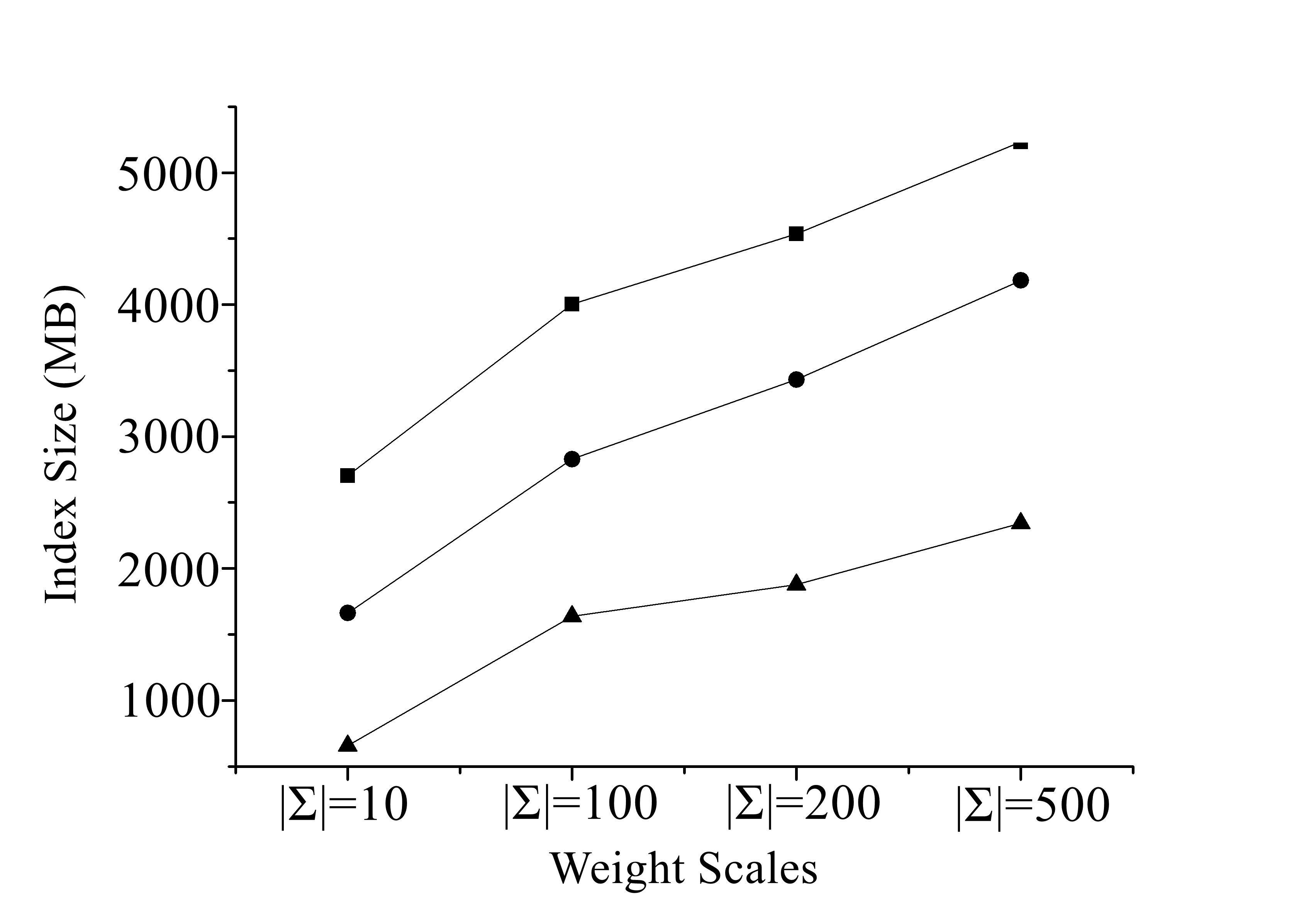}}
  \caption{Comparison of index size under different weight ranges(s)
}
\end{figure}
\begin{figure}[h]
\centering
  \subfloat[soc-LiveJournall]
  {\includegraphics[width=0.25\textwidth]{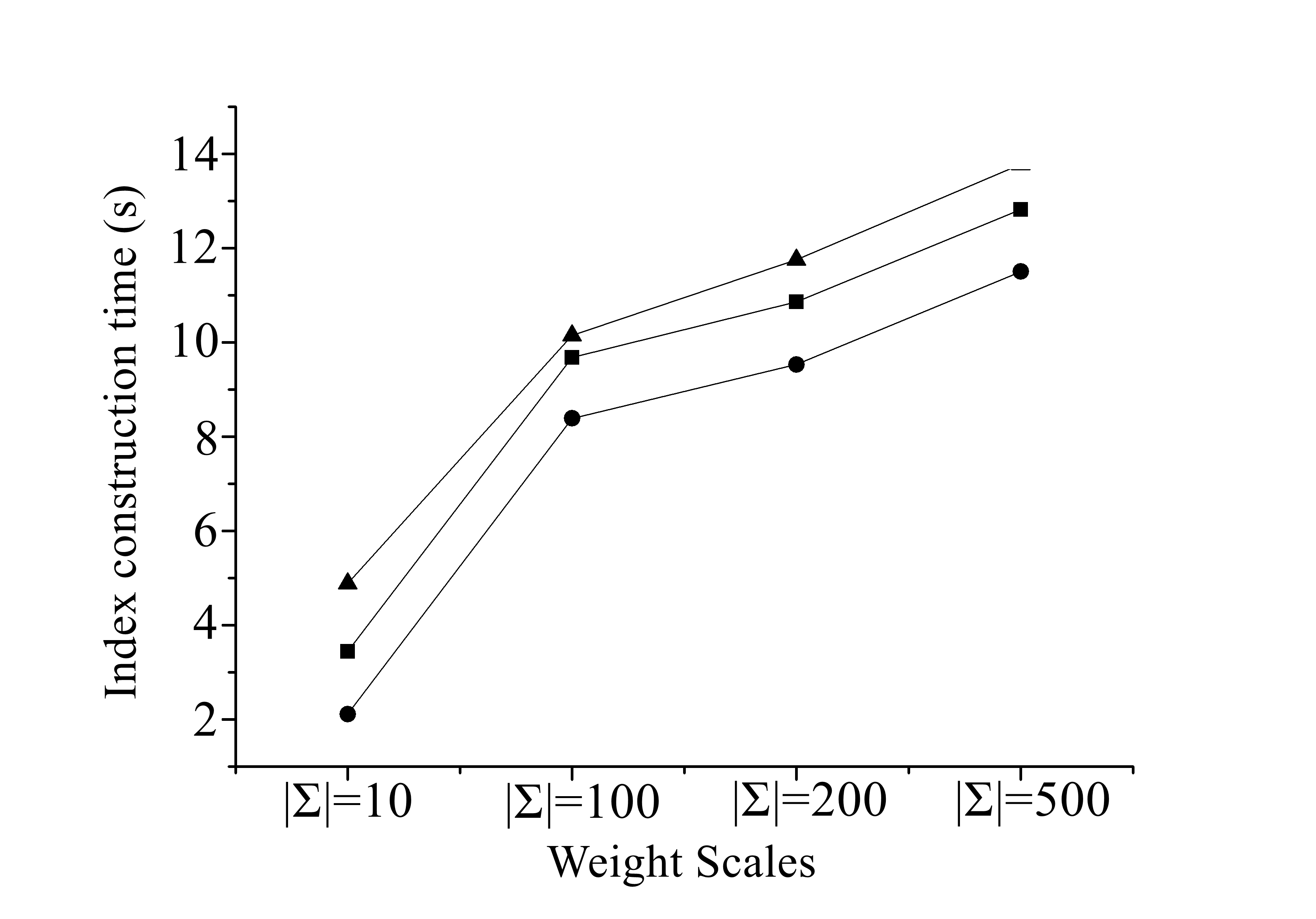}}    
  \subfloat[uniprot150m]
  {\includegraphics[width=0.25\textwidth]{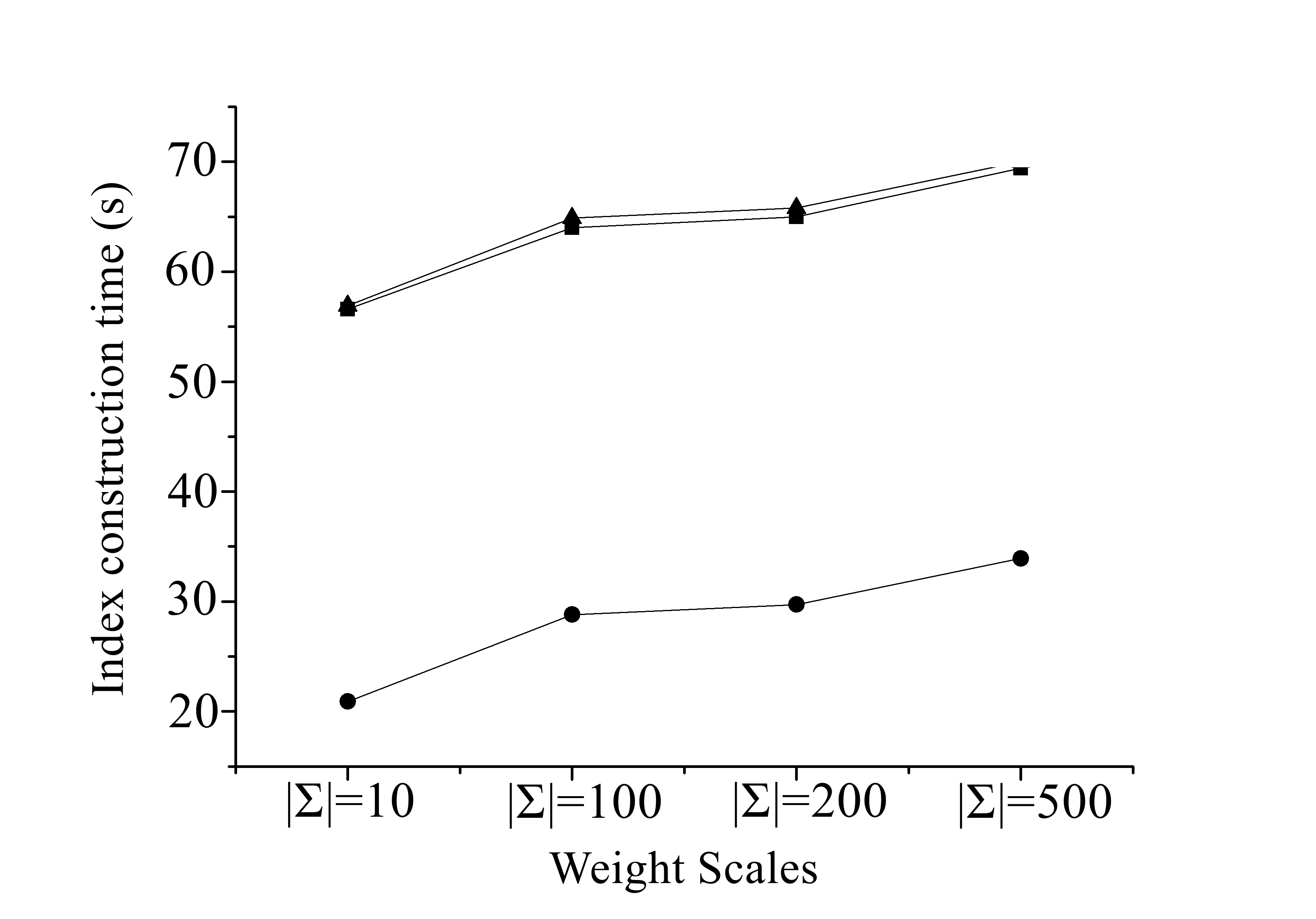}}
  \subfloat[cit-Patents]
  {\includegraphics[width=0.25\textwidth]{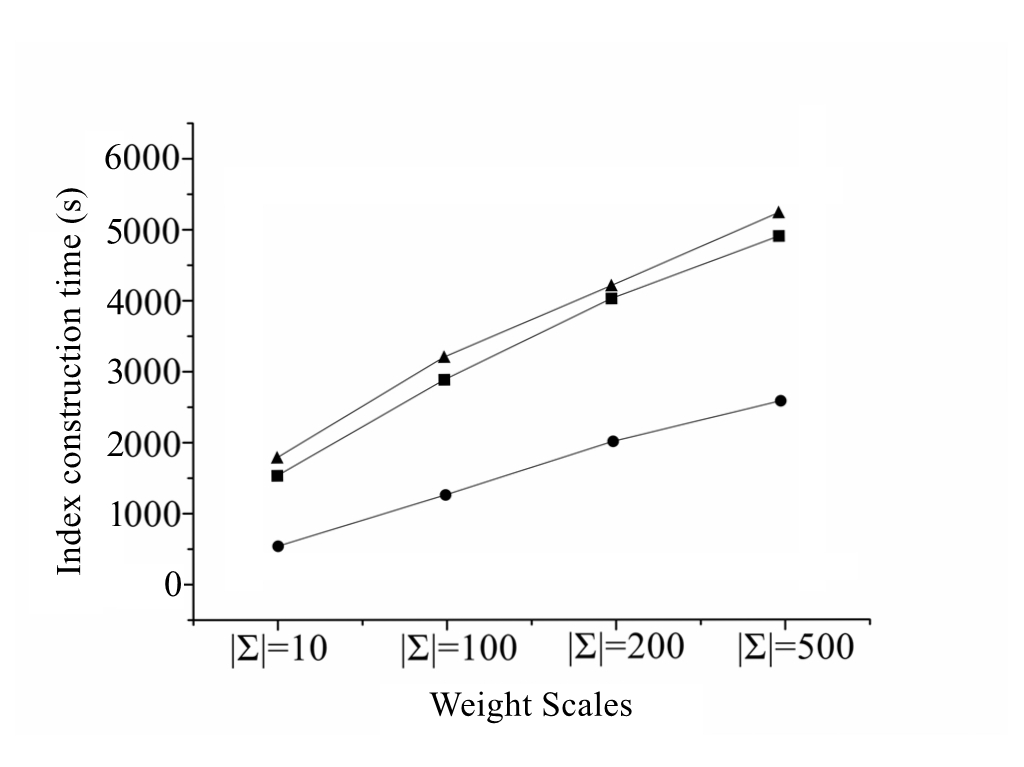}}
  \subfloat[10go-uniprot]
  {\includegraphics[width=0.25\textwidth]{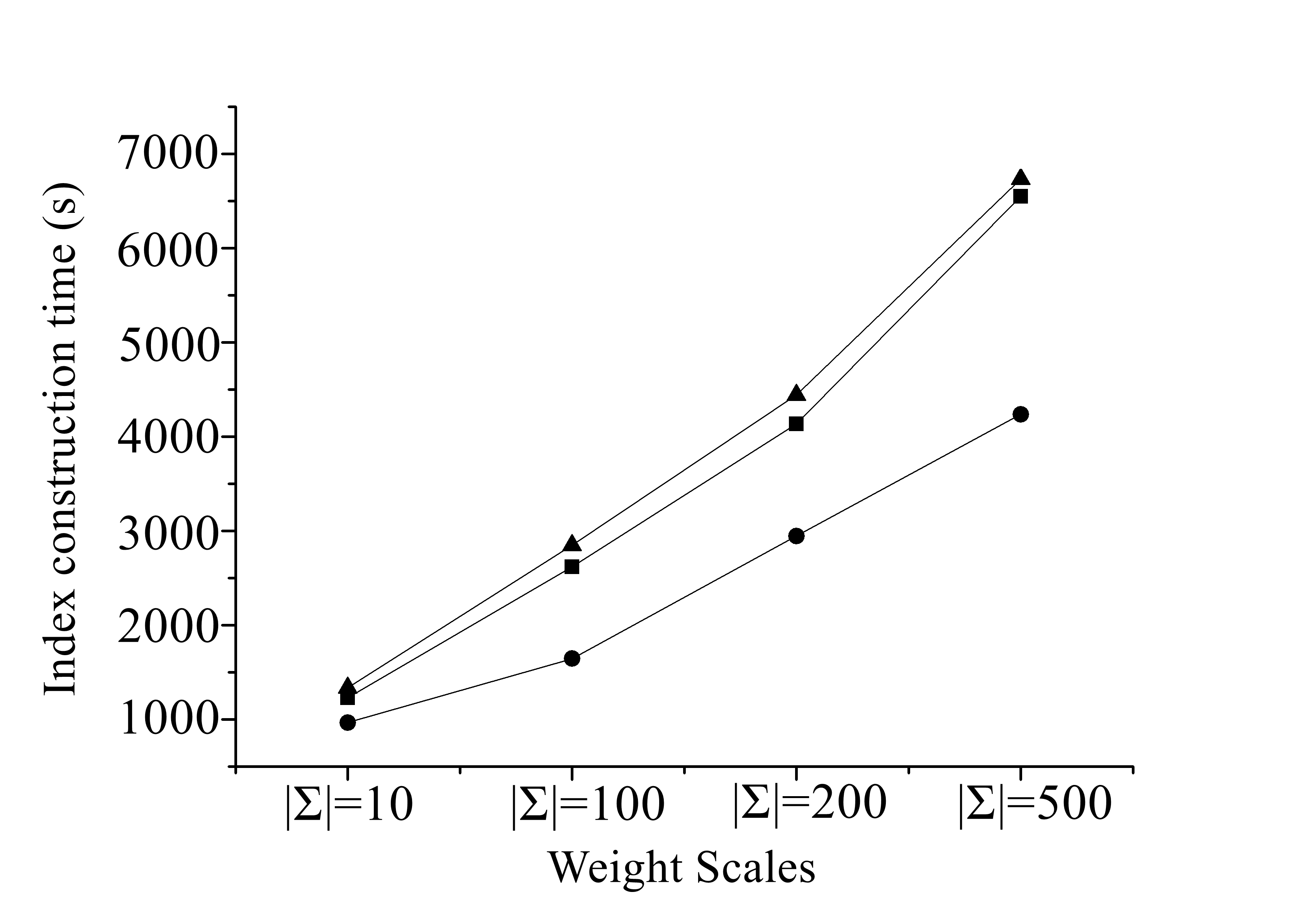}}
  \caption{Comparison of index construction time under different weight ranges(s)
}
\end{figure}
\begin{figure}[h]
\centering
  \subfloat[soc-LiveJournall]
  {\includegraphics[width=0.25\textwidth]{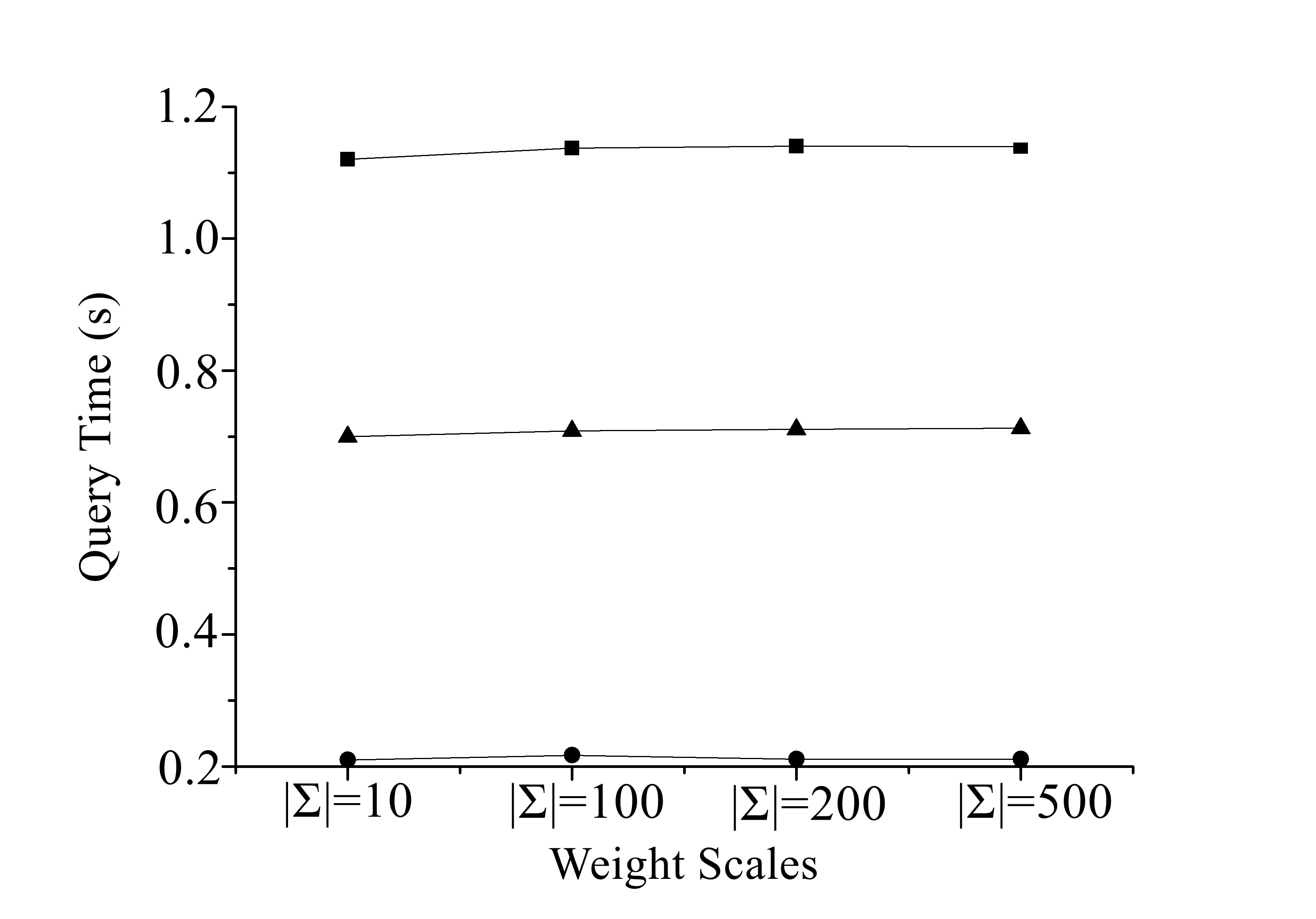}}    
  \subfloat[uniprot150m]
  {\includegraphics[width=0.25\textwidth]{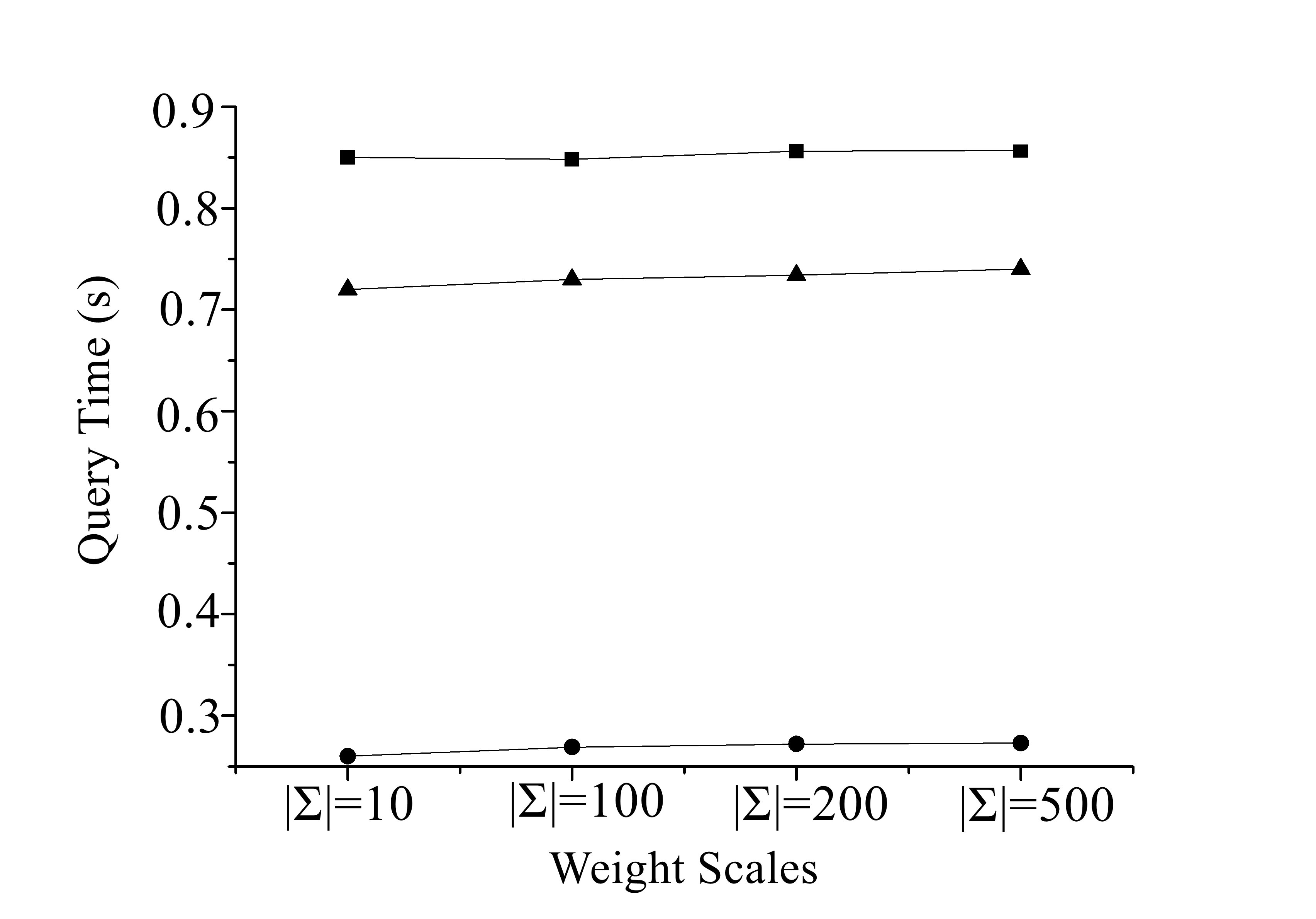}}
  \subfloat[cit-Patents]
  {\includegraphics[width=0.25\textwidth]{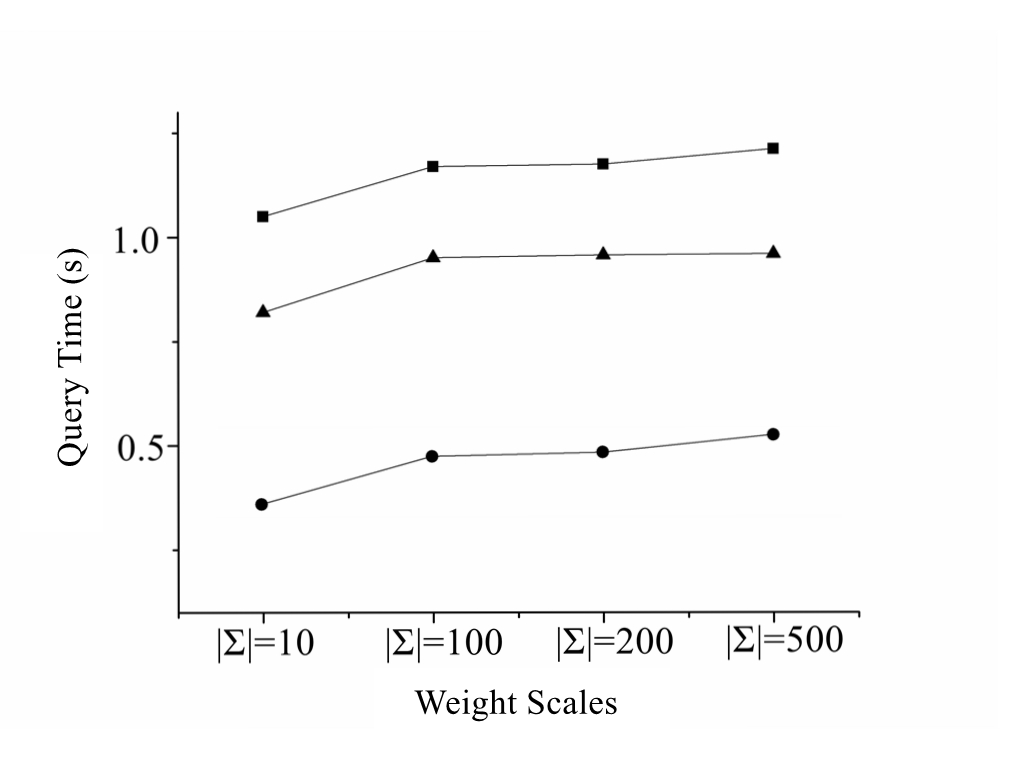}}
  \subfloat[10go-uniprot]
  {\includegraphics[width=0.25\textwidth]{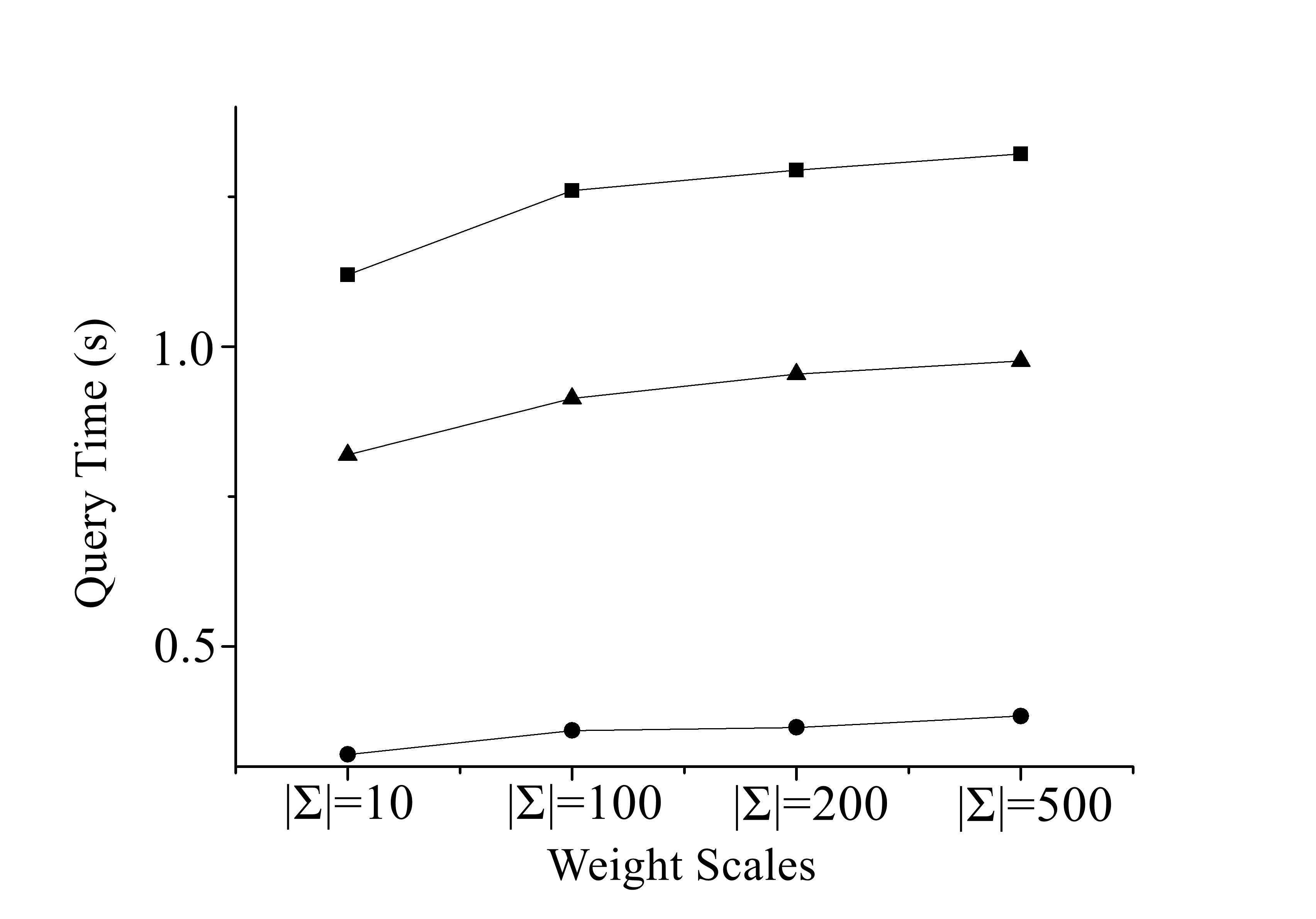}}
  \caption{Comparison of query time under different weight ranges(s)
}
\end{figure}

It can be seen from Figure 7, Figure 8 and Figure 9 that when the size of the weight set increases, the size of the index size and the construction time of the index generally show an increasing trend. This is because the weights are part of the index.  When there are more weights, the index size will also increase accordingly. At the same time, since the number of index edges that need to be traversed also increases, the index construction time will also increase at any time. When the size of the weight set increases, the query time increases slightly, but the overall difference is not much, and the query time is relatively stable.

In addition, by comparing the three algorithms, it can be found that due to the introduction of coverage sets to reduce the number of hop points, GWKRI and LWKRI have different optimization effects on the three indicators. LWKRI performs best if index size is considered, GWKRI performs best if index build time is considered, and GWKRI performs best if runtime is considered. The specific reasons have been explained in the previous experimental part and will not be repeated here.

\section{Conclusion}
In this paper, we have addressed the k-step reachability query problem with weight constraints by proposing the WKRI index and its associated construction algorithm. This approach enhances query efficiency and scalability over basic algorithms by leveraging path weight and distance information through a 2-hop indexing strategy. We further optimized this approach by introducing the GWKRI and LWKRI indexes, which build on the concept of minimum vertex cover to significantly improve index construction efficiency, reduce index size, and enhance query performance.
Experimental results demonstrate that the GWKRI index improves construction efficiency by a factor of 4, reduces index size by approximately 2 times, and increases query efficiency by about 6 times compared to the WKRI index. Additionally, the LWKRI index achieves a further reduction in index size by approximately 3 times compared to GWKRI, showcasing its potential for space-efficient indexing.
Despite these advancements, several limitations and avenues for future research remain. Firstly, the proposed indexes, while efficient, are still constrained by the inherent computational complexity of determining minimum vertex covers, particularly in large and dense graphs. Future work could explore approximation algorithms or heuristic methods to address this issue and further optimize index construction times.
Our current approach focuses on static graphs. Investigating dynamic indexing methods that efficiently handle graph updates, remains an open challenge. 
\bibliography{reference}

\end{document}